\newcounter{linecounter}
\newcommand*{\procline}[1]{{\bf \refstepcounter{linecounter}\thelinecounter\label{ln:#1}.}}
\newcommand*{\refln}[1]{{\bf \ref{ln:#1}}}
\newcommand{\true}{true}
\newcommand{\false}{false}
\newcommand{\ifcode}{{\bf if} \,}
\newcommand{\elsecode}{{\bf else} \,}
\newcommand{\goto}{{\bf go to} \,}
\newcommand{\waittill}{{\bf wait till} \,}
\newcommand{\return}{{\bf return} \,}
\newcommand{\cas}{\mbox{CAS}}
\newcommand{\csowner}{\mbox{\sc CSStatus}}
\newcommand{\seq}{\mbox{\sc Seq}}
\newcommand{\token}{\mbox{\sc Token}}
\newcommand{\minarray}{\mbox{\sc Registry}}
\newcommand{\registry}{\minarray}
\newcommand{\go}{\mbox{\sc Go}}
\newcommand{\abort}{\mbox{\sc AbortSignal}}
\newcommand{\pc}[1]{\ensuremath{PC_{#1}}}
\newcommand{\pcp}{\pc{p}}
\newcommand{\peer}[1]{\ensuremath{peer_{#1}}}
\newcommand{\peerp}{\peer{p}}
\newcommand{\tok}[1]{\ensuremath{tok_{#1}}}
\newcommand{\tokp}{\tok{p}}
\newcommand{\bit}[1]{\ensuremath{b_{#1}}}
\newcommand{\bitp}{\ensuremath{\bit{p}}}
\newcommand{\myseq}[1]{\ensuremath{s_{#1}}}
\newcommand{\myseqp}{\ensuremath{\myseq{p}}}
\newcommand{\mygo}[1]{\ensuremath{g_{#1}}}
\newcommand{\mygop}{\ensuremath{\mygo{p}}}
\newcommand{\mwrite}{\ensuremath{{\tt write}}}
\newcommand{\findmin}{\ensuremath{{\tt findmin}}}
\newcommand{\promote}[1]{\ensuremath{{\tt promote}_{#1}}}
\newcommand{\promotep}{\ensuremath{\promote{p}}}
\newcommand{\tryproc}[1]{\ensuremath{\texttt{try}_{#1}}}
\newcommand{\tryprocp}{\ensuremath{\tryproc{p}}}
\newcommand{\exitproc}[1]{\ensuremath{\texttt{exit}_{#1}}}
\newcommand{\exitprocp}{\ensuremath{\exitproc{p}}}
\newcommand{\recoverproc}[1]{\ensuremath{\texttt{recover}_{#1}}}
\newcommand{\recoverprocp}{\ensuremath{\recoverproc{p}}}
\newcommand{\abortproc}[1]{\ensuremath{\texttt{abort}_{#1}}}
\newcommand{\abortprocp}{\ensuremath{\abortproc{p}}}
\newcommand{\limplies}{\Rightarrow}
\def\final{True}
\newcommand{\cond}[1]{%
	\label{inv:abrt:cond#1}
	\ifnum\pdfstrcmp{\final}{True}=0 \unskip\else #1 \fi\ignorespaces}
\newcommand{\resolve}[2]{\mbox{\refln{#1}::\refln{#2}}}
\newcommand{\gotocs}{\ensuremath{\mbox{IN\_CS}}}
\newcommand{\gotorem}{\ensuremath{\mbox{IN\_REM}}}
\newcommand{\isaborting}[1]{\ensuremath{flag_{#1}}}
\newcommand{\isabortingp}{\ensuremath{\isaborting{p}}}
\newcommand{\status}{\mbox{\em status}}
\newcommand{\good}{\mbox{\em good}}
\newcommand{\abortsigp}{\ensuremath{\abortsig[p]}}
\newcommand{\rectry}{\mbox{\em recover-from-try}}
\newcommand{\reccs}{\mbox{\em recover-from-cs}}
\newcommand{\recexit}{\mbox{\em recover-from-exit}}
\newcommand{\recrem}{\mbox{\em recover-from-rem}}
\newcommand{\procset}{\ensuremath{{\cal P}}}
\newcommand{\varset}{\ensuremath{{\cal X}}}
\newcommand{\valset}{\ensuremath{\mbox{\em Vals}}}
\newcommand{\initmap}{\ensuremath{{\cal F}}}
\newcommand{\operations}{\ensuremath{\mbox{\em OP}}}
\newcommand{\methods}{\ensuremath{{\cal M}}}
\newcommand{\partition}{\Delta}
\newcommand{\abortsig}{\mbox{\sc AbortSignal}}
\newcommand{\normal}{\mbox{\em normal}}
\newcommand{\crash}{\mbox{\em crash}}
\newcommand{\remproc}[1]{\ensuremath{\texttt{remainder}_{#1}}}
\newcommand{\remprocp}{\ensuremath{\remproc{p}}}
\newcommand{\critsec}[1]{\ensuremath{\texttt{cs}_{#1}}}
\newcommand{\critsecp}{\ensuremath{\critsec{p}}}
\newcommand{\upto}{\texttt{-}}
\newcommand{\cache}{\mbox{\sc Cache}}
\journalname{Special issue on NETYS 2019}
\begin{document}

\title{Recoverable Mutual Exclusion with Abortability
	\thanks{The first author is grateful to the Frank family and Dartmouth College
	for their support through James Frank Family Professorship of Computer Science. 
	The second author is grateful for the support from Dartmouth College.}
}

%\subtitle{Do you have a subtitle?\\ If so, write it here}
%\titlerunning{Short form of title}        % if too long for running head

\author{Prasad Jayanti         \and
        Anup Joshi %etc.
}

%\authorrunning{Short form of author list} % if too long for running head

\institute{Prasad Jayanti \at
              Dartmouth College,
              Hanover NH 03755, 
              USA\\
              \email{prasad.jayanti@dartmouth.edu}           %  \\
           \and
           Anup Joshi \at
              Dartmouth College,
              Hanover NH 03755, 
              USA\\
              \email{anup.s.joshi.gr@dartmouth.edu}
}

\date{Received: date / Accepted: date}
% The correct dates will be entered by the editor

\maketitle

\begin{abstract}
Recent advances in non-volatile main memory (NVRAM) technology 
have spurred research on designing algorithms that are resilient to process crashes.
This paper is a fuller version of our conference paper \cite{jayanti:rmeabort},
which presents the first Recoverable Mutual Exclusion (RME) algorithm that supports abortability.
Our algorithm uses only the read, write, and CAS operations, which are commonly supported by multiprocessors.
It satisfies FCFS and other standard properties.

Our algorithm is also adaptive.
On DSM and Relaxed-CC multiprocessors, a process incurs $O(\min(k, \log n))$ RMRs in a passage and
$O(f+ \min(k, \log n))$ RMRs in an attempt, where 
$n$ is the number of processes that the algorithm is designed for,
$k$ is the point contention of the passage or the attempt,
and $f$ is the number of times that $p$ crashes during the attempt.
On a Strict CC multiprocessor, the passage and attempt complexities are $O(n)$ and $O(f+n)$.

Attiya et al. proved that, with any mutual exclusion algorithm,
a process incurs at least $\Omega(\log n)$ RMRs in a passage,
if the algorithm uses only the read, write, and CAS operations \cite{Attiya:lbound}.
This lower bound implies that the worst-case RMR complexity of our algorithm is optimal
for the DSM and Relaxed CC multiprocessors.

\keywords{concurrent algorithm, synchronization, mutual exclusion, recoverable algorithm, fault tolerance, non-volatile main memory, shared memory, multi-core algorithms}
\end{abstract}

\section{Introduction}

Recent advances in non-volatile main memory (NVRAM) technology \cite{inteloptane}\cite{nvmm:pcm}\cite{nvmm:memristor}\cite{nvmm:mram} 
have spurred research on designing algorithms that are resilient to process crashes.
NVRAM is byte-addressable, so it replaces main memory, directly interfacing with the processor.
This development is exciting because, if a process crashes and subsequently restarts, 
there is now hope that the process can somehow recover from the crash by consulting the contents of the NVRAM
and resume its computation.

To leverage this advantage given by the NVRAM, there has been keen interest in reexamining the important distributed computing problems
for which algorithms were designed in the past for the traditional (crash-free) model of an asynchronous shared memory multiprocessor.
The goal is to design new algorithms that guarantee good properties even if processes crash at arbitrary points 
in the execution of the algorithm and subsequently restart and attempt to resume the execution of the algorithm.
The challenge in designing such ``recoverable'' algorithms stems from the fact that
when a process crashes, even though the shared variables that are stored in the NVRAM are unaffected, 
the crash wipes out the contents of the process' cache and CPU registers, including its program counter.
So, when the process subsequently restarts, it can't have a precise knowledge of exactly where it crashed.
For instance, if the last instruction that a process executes before a crash is a compare\&swap (CAS) on a shared variable $X$,
when it subsequently restarts, it can't tell whether the crash occurred just before or just after executing the CAS instruction and,
if it did crash after the CAS, it won't know the response of the CAS (because the crash wipes out the register the CAS's response went into).
The ``recover'' method, which a process is expected to execute when it restarts,
has the arduous task of ensuring that the process can still somehow resume the execution of the algorithm seamlessly.

The mutual exclusion problem, formulated
to enable multiple processes to share a resource that supports only one process at a time \cite{Dijkstra:mutex},
has been thoroughly studied for over half a century for the traditional (crash-free) model,
but its exploration for the crash-restart model is fairly recent.
In the traditional version of the problem, each process $p$ is initially in the ``remainder'' section.
When $p$ becomes interested in acquiring the resource, it executes the $\tryprocp()$ method;
and when this method completes, $p$ is in the ``critical section'' (CS).
To give up the CS, $p$ invokes the $\exitprocp()$ method; and when this method completes, 
$p$ is back in the remainder section.
An algorithm to this problem specifies the code for the try and exit methods so that
at most one process is in the CS at any time and other desirable properties
(such as starvation freedom, bounded exit, and First-Come-First-served, or FCFS)
are also satisfied.
Golab and Ramaraju were the first to reformulate this problem
for the crash-restart model as {\em Recoverable Mutual Exclusion} (RME).
In the RME problem, a process $p$ can crash at any time and subsequently restart \cite{Golab:rmutex}.
If $p$ crashes while in try, CS, or exit, $p$'s cache and registers (aka local variables)
are wiped out and $p$ returns to the remainder section (i.e., crash resets $p$'s program counter to
its remainder section).
When $p$ restarts after a crash, it is required to invoke a new method, named $\recoverprocp()$,
whose job is to ``repair'' the adverse effects of the crash and send $p$ to where it belongs.
In particular, if $p$ crashed while in the CS,
$\recoverprocp()$ puts $p$ back in the CS (by returning $\gotocs$).
On the other hand, if $p$ crashed while executing $\tryprocp()$,
$\recoverprocp()$ has a choice---it can either roll $p$ back to the Remainder
(by returning $\gotorem$) or put it in the CS (by returning $\gotocs$).
Similarly, if $p$ crashed while executing $\exitprocp()$,
$\recoverprocp()$ has a choice of returning either $\gotorem$ or $\gotocs$.

Golab and Ramaraju made a crucial observation that if $p$ crashes while in the CS,
then no other process should be allowed into the CS until $p$ restarts and reenters the CS.
This {\em Critical Section Reentry} (CSR) requirement was strengthed by Jayanti and Joshi's
{\em Bounded CSR} requirement: if $p$ crashes while in the CS,
when $p$ subsequently restarts and executes the recover method,
the recover method should put $p$ back into the CS in a bounded number of its own steps \cite{jayanti:fcfsmutex}.
There has been a flurry of research on RME algorithms in the recent years 
\cite{chan:amortizedrme}\cite{dhoked:adaptiverme}\cite{Golab:rmutex2}\cite{Golab:rmutex3}\cite{Golab:rmutex}\cite{jayanti:fasasmutex}\cite{jayanti:rmesublog}\cite{jayanti:fcfsmutex}\cite{jayanti:rmeabort}\cite{morrison:abrtrme}.

Orthogonal to this development of recoverable algorithms,
motivated by the needs of real time systems and database systems,
Scott and Scherer advocated the need for mutual exclusion algorithms 
to support the ``abort'' feature, whereby a process in the try section can quickly quit the algorithm, 
if it so desires \cite{scott:abort}.
More specifically, if $p$ receives an abort signal from the environment while executing the try method,
the try method should complete in a bounded number of $p$'s steps and
either launch $p$ into the CS or send $p$ back to the remainder section.
In the past two decades, there has been a lot of research on abortable mutual exclusion algorithms 
for the traditional (crash-free) model \cite{}.

The possibility of crashes, together with the CSR requirement, renders abortability even more important 
in the crash-restart model,
yet there have been no {\em abortable} recoverable algorithms until the conference publication
of the algorithm in this submission \cite{jayanti:rmeabort}.
There has since been one more algorithm, by Katzan and Morrison \cite{morrison:abrtrme},
and we will soon compare the two algorithms.

\subsection{RMR complexity.}

{\em Remote Memory Reference} (RMR) complexity is the standard
complexity metric used for comparing mutual exclusion algorithms,
so we explain it here.
This metric is explained for the two prevalent models of multiprocessors---{\em Distributed Shared Memory} (DSM) and {\em Cache-Coherent} (CC) multiprocessors---as follows.
In DSM, shared memory is partitioned into $n$ portions, one per process, and
each shared variable resides in exactly one of the $n$ partitions.
A step in which a process $p$ executes an instruction on a shared variable $X$
is considered an RMR if and only if $X$ is not in $p$'s partition of the shared memory.

In CC, the shared memory is remote to all processes, but every process has a local cache.
A step in which a process $p$ executes an instruction $op$ on a shared variable $X$
is considered an RMR 
if and only if $op$ is {\em read} and $X$ is not in $p$'s cache, or
$op$ is any non-read operation (such as a {\em write} or {\em CAS}).
If $p$ reads $X$ when $X$ is not present in $p$'s cache,
$X$ is brought into $p$'s cache.
If a process $q$ performs a non-read operation $op$ while $X$ is in $p$'s cache,
$X$'s copy in $p$'s cache is deleted in the {\em Strict CC model},
but in the {\em Relaxed CC model} it is deleted only if $op$ changes $X$'s value.
Thus, if $X$ is in $p$'s cache and $q$ performs an unsuccessful CAS on $X$,
then $X$ continues to remain in $p$'s cache in the relaxed CC model.

A {\em passage} of a process $p$ starts when $p$ leaves the remainder section
and completes at the earliest subsequent time when $p$ returns to the remainder
(note that $p$ returns to the remainder either because of a crash or because of 
a normal return from try, exit or recover methods).
An {\em attempt} of $p$ starts when $p$ leaves the remainder
and completes at the earliest subsequent time when $p$ returns to the remainder ``normally,''
i.e., not because of a crash.
Note that each attempt includes one or more passages. 

The {\em RMR complexity of a passage} (respectively, {\em attempt}) of a process $p$
is the number of RMRs that $p$ incurs in that passage (respectively, attempt).

\subsection{Adaptive complexity.}

A process is {\em active} if it is in the CS, or executing the try, exit, or recover methods,
or crashed while in try, CS, exit, or recover and has not subsequently invoked the recover method.
The {\em point contention} at any time $t$ is the number of active processes at $t$.
The point contention of a passage (respectively, attempt) is the maximum point contention
at any time in that passage (respectively, attempt).
An algorithm is {\em adaptive} if the RMR complexity $r$ of each passage (or attempt) 
of a process $p$ is a function of that passage's (or attempt's) point contention $k$
such that $r= O(1)$ if $k = O(1)$.

\subsection{Our contribution.}

We present the first abortable RME algorithm.
Our algorithm is based on the ideas underlying two earlier CAS-based algorithms---one that is recoverable 
but not abortable \cite{jayanti:fcfsmutex} and another that is abortable but not recoverable \cite{jayanti:abrt}.
Our algorithm uses only the read, write, and CAS operations, which are commonly supported by multiprocessors.
It satisfies FCFS and other standard properties 
(starvation-freedom, bounded exit, bounded CSR, and bounded abort).
The algorithm's space complexity---the number of words of memory used---is $O(n)$.

Our algorithm is also adaptive.
On DSM and Relaxed CC multiprocessors, a process $p$ incurs $O(\min(k, \log n))$ RMRs in a passage and
$O(f+ \min(k, \log n))$ RMRs in an attempt, where 
$n$ is the number of processes that the algorithm is designed for,
$k$ is the point contention of the passage or the attempt,
and $f$ is the number of times that $p$ crashes during the attempt.
On a Strict CC multiprocessor, the passage and attempt complexities are $O(n)$ and $O(f+n)$.

Attiya et al. proved that, with any mutual exclusion algorithm
(even if the algorithm does not have to satisfy recoverability or abortability),
a process incurs at least $\Omega(\log n)$ RMRs in a passage,
if the algorithm uses only the read, write, and CAS operations \cite{Attiya:lbound}.
This lower bound implies that the worst-case RMR complexity of our algorithm is optimal
for the DSM and Relaxed CC multiprocessors.

\subsection{Comparison to Katzan and Morrison's algorithm.}

To the best of our knowledge, there is only one other abortable RME algorithm,
published recently by Katzan and Morrison \cite{morrison:abrtrme}.
They achieve sublogarithmic complexity:
a process incurs at most $O(\min(k, \log n/\log \log n)$ RMRs in a passage
and $O(f+\min(k, \log n/\log \log n)$ in an attempt.
Furthermore, they achieve these bounds for even the Strict CC multiprocessor.

On the other hand, our work has the following merits.
Unlike the CAS instruction employed in our algorithm,
the fetch\&add instruction, which their algorithm employs to beat 
Attiya et al's lower bound and achieve sublogarithmic complexity,
is not commonly supported by current machines.
Their algorithm does not satisfy FCFS and has a
higher space complexity of $O(n \log^2 n/ \log \log n)$.
Their algorithm is stated to satisfy starvation-freedom if the total number of crashes in the run is finite.
In contrast, our algorithm guarantees that each attempt
completes even in the face of infinitely many crashes in the run, provided that
there are only finitely many crashes during each attempt.

Finally, Katzan and Morrison correctly point out a shortcoming in our conference paper:
our algorithm there admits starvation if there are infinitely many aborts in a run.
The algorithm in this submission has been revised to eliminate this shortcoming.

\subsection{Related Research.}

All of the works on RME prior to the conference version of our paper \cite{jayanti:rmeabort} has focused on designing algorithms that do not provide abortability as a capability.
Golab and Ramaraju \cite{Golab:rmutex} formalized the RME problem and designed several algorithms by adapting traditional mutual exclusion algorithms.
Ramaraju \cite{ramaraju:rglock}, Jayanti and Joshi \cite{jayanti:fcfsmutex}, and Jayanti et al. \cite{jayanti:fasasmutex} 
designed RME algorithms that support the First-Come-First-Served property \cite{Lamport:fcfsmutex}.
Golab and Hendler \cite{Golab:rmutex2} presented an algorithm that has sub-logarithmic RMR complexity on CC machines.
Jayanti et al. \cite{jayanti:rmesublog} presented a unified algorithm that has a sub-logarithmic RMR complexity on both CC and DSM machines.
In another work, Golab and Hendler \cite{Golab:rmutex3} presented an algorithm that has the ideal $O(1)$ passage complexity,
but this result assumes that {\em all} processes in the system crash {\em simultaneously}. 
Recently, Dhoked and Mittal \cite{dhoked:adaptiverme} present an RME algorithm whose RMR complexity adapts to the number of crashes,
and Chan and Woelfel \cite{chan:amortizedrme} present an algorithm which has an O(1) amortized RMR complexity. 
Recently Katzan and Morrison \cite{morrison:abrtrme} gave an abortable RME algorithm that incurs sub-logarithmic RMR on CC and DSM machines.
%For works not on RME but on the theme of crash-restart systems using non-volatile main-memory, 
%Berryhill et al. \cite{berryhill:nvmm}, Izraelevitz et al. \cite{izraelevitz:nvmm}, and 
%Attiya et al. \cite{attiya:rlin} present correctness conditions for recoverable objects.

When it comes to abortability for classical mutual exclusion problem, 
Scott \cite{Scott:abrt} and Scott and Scherer \cite{ScottSch:abrt} designed abortable algorithms
that build on the queue-based algorithms \cite{craig:mcs}\cite{MCS:mutex}.
Jayanti \cite{jayanti:abrt} designed an algorithm based on read, write, and comparison primitives 
having $O(\log n)$ RMR complexity which is also optimal \cite{Attiya:lbound}.
Lee \cite{lee:abrt} designed an algorithm for CC machines that uses the Fetch-and-Add and Fetch-and-Store primitives.
Alon and Morrison \cite{alon:abrt} designed an algorithm for CC machines that has a sub-logarithmic RMR complexity
and uses the read, write, Fetch-And-Store, and comparison primitives.
Recently, Jayanti and Jayanti \cite{jayanti:swapabortable} designed an algorithm for the CC and DSM machines that has a constant amortized RMR complexity 
and uses the read, write, and Fetch-And-Store primitives.
While the works mentioned so far have been deterministic algorithms, 
randomized versions of classical mutual exclusion with abortability exist.
Pareek and Woelfel \cite{pareek:abrt} give a sublogarithmic RMR complexity randomized algorithm
and Giakkoupis and Woelfel \cite{giakkoupis:abrt} give an $O(1)$ expected amortized RMR complexity randomized algorithm.

\section{Specification of the problem} \label{ch2:spec}

In this section, we rigorously specify the Abortable RME problem by defining
what an abortable RME algorithm is,
modeling the algorithm's runs, and
stating the properties that these runs must satisfy.

\subsection{Abortable RME algorithm}

		An {\bf Abortable Recoverable Mutual Exclusion algorithm}, abbreviated {\bf Abortable RME algorithm}, 
		is a tuple $(\procset, \varset, \valset, \initmap, \operations, \partition, \methods)$, where
		
		\begin{itemize}
			\item
			$\procset$ is a set of processes.
			Each process $p \in \procset$ has a set of registers,
			including a {\em program counter}, denoted $\pcp$,
			which points to an instruction in $p$'s code.
			
			\item
			$\varset$ is a set of variables,
			which includes a Boolean variable $\abortsigp$, for each $p \in \cal P$.
			No process except $p$ can invoke any operation on $\abortsigp$,
			and $p$ can only invoke a read operation on $\abortsigp$.
			
			Intuitively, the ``environment''  sets  $\abortsigp$ to $\true$
			when it wishes to communicate to $p$ that it should abort its attempt to acquire the CS and return to the Remainder.
			
			\item
			$\valset$ is a set of values (that each variable in $\varset$ can possibly take on).
			For example, on a 64-bit machine, $\valset$ would be the set of all 64-bit integers. 
			
			\item
			$\initmap$ is a function that assigns a value from $\valset$ to each variable in $\varset$.
			For all $X \in \varset$, $\initmap(X)$ is $X$'s {\em initial value}.
			
			\item
			$\operations$ is a set of operations that each variable in $\varset - \{\abortsigp \mid p \in \procset \}$ supports.
			
			For the algorithm in this paper, $\operations= \{\mbox{{\em read}, {\em write}, {\em CAS}}\}$, where 
				\cas$(X, r, s)$, when executed by a process $p$ (and
$X$ is a variable and $r$, $s$ are $p$'s registers), compares the values of $X$ and $r$; 
				if they are equal, the operation writes in $X$ the value in $s$ and returns $\true$;
				otherwise, the operation returns $\false$, leaving $X$ unchanged.
			
			\item
			$\partition$ is a partition of $\varset$ into $|\procset|$ sets, named $\partition(p)$, for each $p \in \procset$.
			Intuitively, $\partition(p)$ is the set of variables that reside locally at process $p$'s partition on a DSM machine, but has no relevance on a CC machine.
			
			\item
			$\methods$ is a set of methods, which includes 
			three methods per process $p \in \procset$, named $\tryprocp()$, $\exitprocp()$, and $\recoverprocp()$, such that:
			\begin{itemize}
				\item
				In any instruction of any method, 
				at most one operation is performed and 
				it is performed on a single variable from $\varset$.
				\item
				The methods $\tryprocp()$ and $\recoverprocp()$ return a value from $\{\gotocs, \gotorem\}$, and
				$\exitprocp()$ has no return value.
				
				\item
				None of $\tryprocp()$, $\exitprocp()$, or $\recoverprocp()$ calls itself or the other two.
				(This assumption simplifies the model, but is not limiting in any way because 
				it does not preclude the use of helper methods each of which can call itself or the other helper methods.)
			\end{itemize}
		\end{itemize}

\subsection{Abstract sections of code and abstract variables}

For each process $p \in \procset$, we model $p$'s code outside of the methods in $\methods$
to consist of two disjoint sections, named $\remprocp()$ and $\critsecp()$.
Furthermore, we introduce the following {\em abstract} variables,
which are not in $\varset$ and not accessed by the methods in $\methods$,
but are helpful in defining the problem.

\begin{itemize}	
	\item
	$\status_p \in \{\good, \rectry, $\reccs, $\recexit, \recrem\}$.

	Informally, $\status_p$ models $p$'s ``recovery status''. If $\status_p \neq \good$, it means that $p$ is still recovering from a crash, and in this case, the value of $\status_p$ reveals the section of code where $p$ most recently crashed.
	
	\item
	$\cache_p$ holds a set of pairs of the form $(X, v)$,
	where $X \in \varset$ and $v \in \valset$.
	Informally, if $(X,v)$ is present in the cache, $X$ is in $p$'s cache and $v$ is its current value.
	This abstract variable helps define what operations count as {\em remote memory references} (RMR)  on
	CC machines.
\end{itemize}

\subsection{Run, Fair Run, Passage, Attempt} \label{ch2:executionmodel}

A {\bf state} of a process $p$ is a function that assigns a value to each of $p$'s registers, 
including $\pcp$, and a value to each of $\status_p$, $\abortsigp$, and $\cache_p$.

A {\bf configuration} is a function that assigns a state to each process in $\procset$ 
and a value to each variable in $\varset$.
(Intuitively, a configuration is a snapshot of the states of processes and values of variables at a point in time.)

An {\bf initial configuration} is a configuration where, for each $p \in \procset$, 
$\pcp = \remprocp()$, $\status_p = \good$, $\abortsigp = \false$, and $\cache_p = \emptyset$;
and, for each $X \in \varset$, $X = \initmap(X)$.

A {\bf run} is a finite sequence 
$C_0, \alpha_1, C_1, \alpha_2, C_2,  \ldots \alpha_{k}, C_k$, or an infinite sequence \linebreak
$C_0, \alpha_1, C_1, \alpha_2, C_2, \ldots$ such that:
		
\begin{enumerate}
\item
$C_0$ is an initial configuration and, for each $i$, $C_i$ is a configuration and 
$\alpha_i$ is either $(p, \normal)$ or $(p, \crash)$, for some $p \in \procset$.

We call each triple $(C_{i-1}, \alpha_i, C_i)$ a {\em step};
it is a {\em normal step of $p$} if $\alpha_i = (p, \normal)$, and a {\em crash step of $p$} if $\alpha_i = (p, \crash)$.
			
\item
For each normal step $(C_{i-1}, (p, \normal), C_i)$, $C_i$ is the configuration that results
when $p$ executes an enabled instruction of its code, explained as follows:
			
\begin{itemize}
\item
If $\pcp = \remprocp()$ and $\status_p = \good$ in $C_{i-1}$, 
then $p$ invokes either $\tryprocp()$ or $\recoverprocp()$.
				
\item
If $\pcp = \remprocp()$ and $\status_p \neq \good$ in $C_{i-1}$, then $p$ invokes $\recoverprocp()$.
				
\item
If $\pcp = \critsecp()$, then $p$ invokes $\exitprocp()$.

\item
Otherwise, $p$ executes the instruction that $\pcp$ points to in $C_{i-1}$. \\
If this instruction returns $\gotocs$ (resp., $\gotorem$), 
$\pcp$ is set to $\critsecp()$ (resp., $\remprocp()$). \\
If the instruction causes $p$ to return from $\recoverprocp()$, $\status_p$ is set to $\good$ in $C_i$. \\
If $p$ performs a read on $X$ and $X$ is not present in $\cache_p$ in $C_{i-1}$,
then $(X, v)$ is inserted in $\cache_p$, where $v$ is $X$'s value in $C_{i-1}$. \\
In the Strict-CC model, if $p$ performs a non-read operation on $X$,
$X$ is removed from $\cache_q$, for all $q \in \procset$. \\
In the Relaxed-CC model, if $p$ performs a non-read operation on $X$ that changes $X$'s value,
$X$ is removed from $\cache_q$, for all $q \in \procset$.
\end{itemize}
			
\item
For each crash step $(C_{i-1}, (p, \crash), C_i)$, we have:
			
\begin{itemize}
\item
In $C_i$, $\pcp$ is set to $\remprocp()$ and all other registers of $p$ are set to arbitrary values,
and $\cache_p$ is set to $\emptyset$.
				
\item
If $\status_p \neq \good$ in $C_{i-1}$, then $\status_p$ remains unchanged in $C_i$.
Otherwise, if (in $C_{i-1}$) $p$ is in $\tryprocp()$ (respectively, $\critsecp()$, $\exitprocp()$, or $\recoverprocp()$),
then $\status_p$ is set in $C_i$ to $\rectry$ (respectively, $\reccs$, $\recexit$, or $\recrem$).
\end{itemize}
			
\end{enumerate}

A run $R = C_0, \alpha_1, C_1, \alpha_2, C_2,  \ldots$ is {\bf fair} if and only if either $R$ is finite or, 
for all configurations $C_i$ and for all processes $p \in \procset$, the following condition is satisfied:
unless $\pcp = \remprocp()$ and $\status_p = \good$ in $C_i$, 
$p$ has a step in the suffix of $R$ from $C_i$.

Thus, in a fair run, a crashed process eventually restarts,
no process stays in the CS forever, and 
no process permanently ceases to take steps when it is outside the Remainder section.

A {\bf passage} of a process $p$ is a contiguous sequence $\sigma$ of steps in a run such that
$p$ leaves $\remprocp()$ in the first step of $\sigma$ and the last step of $\sigma$ is the earliest subsequent step in the run where $p$ reenters $\remprocp()$ (either because $p$ crashes or because $p$'s method returns $\gotorem$).

An {\bf attempt} of a process $p$ is a maximal contiguous sequence $\sigma$ of steps in a run such that
$p$ leaves $\remprocp()$ in the first step of $\sigma$ with $\status_p = \good$ and the last step of $\sigma$ is the earliest subsequent normal step in the run that causes $p$ to reenter $\remprocp()$ (which would be a return from $\exitprocp$, or a return of $\gotorem$ from $\tryprocp$ or $\recoverprocp$).

\subsection{Remote Memory Reference (RMR) and Point Contention}

A step of $p$ is an {\bf RMR on a DSM machine} if and only if it is a normal step in which
$p$ performs an operation on some variable that is not in $\partition(p)$.

A step of $p$ is an {\bf RMR on a Strict or Relaxed CC machine} if and only if it is a normal step in which
$p$ performs a non-read operation, or $p$ reads some variable that is not present in $p$'s cache.

The {\bf point contention} at a configuration $C$ is the number of processes $p$ such that
$(\pcp \neq \remprocp) \vee (\status_p \neq \good)$ in $C$.

\subsection{Desirable properties} \label{ch2:properties}

We now state the desirable properties of an abortable RME algorithm,
which we divide into three groups---general, recovery-related, and abort-related.

\vspace{0.1in}
\noindent
{\bf General properties}:

		\begin{itemize}
			\item[P1]
			\underline{Mutual Exclusion}: At most one process is in the CS in any configuration of any run.
			
			\item[P2]
			\underline{Bounded Exit}: There is an integer $b$ such that if in any run any process $p$ invokes
			and executes $\exitprocp()$ without crashing, the method completes in at most $b$ steps of $p$.
			
			\item[P3]
			\underline{Weak Starvation Freedom (WSF)}: In every fair infinite run in which there are only finitely many crash steps,
			if a process $p$ is in the Try section in a configuration, $p$ is in a different section in a later configuration.
			\item[P4]
			\underline{Starvation Freedom (SF)}: In every fair infinite run in which every attempt contains 
			only finitely many crash steps,
			if a process $p$ is in the Try section in a configuration, $p$ is in a different section in a later configuration.
			%\end{itemize}
			
			We note that SF implies WSF.
			
			\item[P5]
			\underline{First-Come-First-Served (FCFS)}:
	There is an integer $b$ such that in any run,
	if $A$ and $A'$ are attempts by any distinct processes $p$ and $p'$, respectively,
	$p$ performs at least $b$ consecutive normal steps in $A$ before the attempt $A'$ starts, 
	and $p$ neither receives an abort signal nor subsequently crashes in $\tryprocp()$ in $A$,
	then $p'$ does not enter the CS in $A'$ before $p$ enters the CS in $A$.
			
		\end{itemize}

\vspace{0.1in}
\noindent
{\bf Recovery related properties}:
		
		\begin{itemize}
			\item[P6]
			\underline{Critical Section Reentry (CSR) \cite{Golab:rmutex}}:
			In any run, if a process $p$ crashes while in the CS, 
			no other process enters the CS until $p$ subsequently reenters the CS.
			
			\item[P7]
			\underline{Bounded Recovery to CS}:
			There is an integer $b$ such that if in any run any process $p$ 
			executes $\recoverprocp()$ without crashing and with $\status_p = \reccs$, 
			the method completes in at most $b$ steps of $p$ and returns $\gotocs$.
			
			\item[P8]
			\underline{Bounded Recovery to Exit}:
			There is an integer $b$ such that if in any run any process $p$ 
			executes $\recoverprocp()$ without crashing and with $\status_p = \recexit$, 
			the method completes in at most $b$ steps of $p$.
			
			\item[P9]
			\underline{Fast Recovery to Remainder}:
			There is an absolute constant $b$, i.e., a constant independent of $|\procset|$, such that if in any run any process $p$ 
			executes $\recoverprocp()$ without crashing and with $\status_p \in \{\good, \recrem\}$, 
			the method completes in at most $b$ steps of $p$.
			
			\item[P10]
			\underline{Bounded Recovery to Remainder}:
			There is an integer $b$ such that if in any run
			$\recoverprocp()$, executed by a process $p$ with $\status_p = \rectry$, returns $\gotorem$,
			$p$ must have completed that execution of $\recoverprocp()$ in at most $b$ of its steps.
		\end{itemize}

\vspace{0.1in}
\noindent
{\bf Abort related properties}:

\vspace{0.08in}
\noindent
For any run $R$, any configuration $C$ of $R$, and any process $p$,
define the predicate $\beta(R, p, C)$ as $\true$ if and only if in configuration $C$
it is the case that $\abortsigp$ is $\true$ and
either $p$ is in $\tryprocp()$ or $p$ is in $\recoverprocp()$ with $\status_p = \rectry$.

		\begin{itemize}
			\item[P11]
			\underline{Bounded Abort}: 
			There is an integer $b$ such that, for each $R, C, p$,
			if $\beta(R, p, C)$ is true,
			$\abortsigp$ stays $\true$ for ever (i.e., stays true in the suffix $R'$ of the run from $C$), 
			and $p$ executes steps without crashing (i.e., $p$ has no crash steps in $R'$),
			then $p$ enters either the CS or the remainder in at most $b$ of its steps (in $R'$).
			
			\item[P12]
			\underline{No Trivial Aborts}: 
			In any run, if $\abortsigp$ is $\false$ when a process $p$ invokes $\tryprocp()$,
			$\abortsigp$ remains $\false$ forever, and $p$ executes steps without crashing,
			then $\tryprocp()$ does not return $\gotorem$.
		\end{itemize}

\section{The Algorithm} \label{sec:algo}
We present our abortable RME algorithm in Figure~\ref{algo:abrt}. 
The algorithm is designed for the set of processes $\procset =  \{ 1, 2, \ldots, n \}$.
All the shared variables used by our algorithm are stored in NVRAM.
Variables with a subscript of $p$ to their name are local to process $p$, and are stored in $p$'s registers.
%We assume that the external signal to process $p$ asking it to abort is made available at $\abort[p]$ as a boolean value,
%with a value of $\true$ indicating that the signal is active and a value of $\false$ indicating otherwise.

\begin{figure}
	\begin{footnotesize}
		%\hrule
		\vspace{0mm}
		\begin{tabbing}
			\hspace{0in} \= {\bf Persistent variables (stored in NVRAM)} \hspace{0.2in} \= \hspace{0.2in} \=  \hspace{0.2in} \= \hspace{0.2in} \= \hspace{0.2in} \=\\
			\> \hspace{0.1in} \= $\minarray[1 \dots |\procset|]$ : A min-array; initially $\minarray[p] = (p, \infty)$, for all $p \in \procset$.\\
			\> \> $\csowner \in \{0\} \times (\{ 0 \} \cup  \mathbb{N}^{+}) \cup \{1\} \times \procset$; initially $(0, 1)$. \\
			\> \> $\seq \in \mathbb{N}$; initially 1. \\
			\> \> $\forall p  \in \procset, \go[p] \in \mathbb{N}^{+} \cup \{-1, 0\}$, initially $\perp$. \\
			\> \> $\token \in \mathbb{N}$, initially 1. 
		\end{tabbing}
		\vspace*{-6mm}
		\begin{tabbing}
			\hspace{0in} \= \hspace{0.2in} \= \hspace{0.2in} \=  \hspace{0.2in} \= \hspace{0.2in} \= \hspace{0.2in} \=\\
			\> \procline{abrt:rem:1} \> {\texttt{Remainder Section}} \\
			\\
			\> \> \underline{{\bf procedure} $\tryprocp()$:} \\
			\> \procline{abrt:try:1} \> $\tokp \leftarrow \token$ \\
			\> \procline{abrt:try:2} \> \cas$(\token, \tokp, \tokp + 1)$ \\
			\> \procline{abrt:try:3} \> $\go[p] \leftarrow \tokp$ \\
			\> \procline{abrt:try:4} \> $\registry[p].\mwrite((p, \tokp))$ \\
			\> \procline{abrt:try:5} \> $\promotep(\false)$ \\
			\> \procline{abrt:try:6} \> \waittill $\go[p] = 0 \vee \abort[p]$ \\
			\> \procline{abrt:try:7} \> \ifcode $\go[p] = 0$: \return $\gotocs$ \\
			\> \procline{abrt:try:8} \> \return $\abortprocp()$ \\
			\\
			\> \procline{abrt:cs:1} \> {\texttt{Critical Section}} \\
			\\
			\> \> \underline{{\bf procedure} $\exitprocp()$:} \\
			\> \procline{abrt:exit:1} \> $\registry[p].\mwrite((p, \infty))$ \\
			\> \procline{abrt:exit:2} \> $\myseqp \leftarrow \seq$ \\
			\> \procline{abrt:exit:3} \> $\seq \leftarrow \myseqp+1$ \\
			\> \procline{abrt:exit:4} \> $\csowner \leftarrow (0, \myseqp+1)$ \\
			\> \procline{abrt:exit:5} \> $\promotep(\false)$ \\
			\> \procline{abrt:exit:6} \> $\go[p] \gets -1$ \\
			\\
			\> \> \underline{{\bf procedure} $\recoverprocp()$:} \\
			\> \procline{abrt:rec:1} \> \ifcode $\go[p] = -1$: \return $\gotorem$ \\
			\> \procline{abrt:rec:2} \> \return $\abortprocp()$\\
			\\ 
			\> \> \underline{{\bf procedure} $\abortprocp()$:} \\
			\> \procline{abrt:abort:1} \> $\registry[p].\mwrite((p, \infty))$ \\
			\> \procline{abrt:abort:2} \> $\promotep(\true)$ \\
			\> \procline{abrt:abort:3} \> \ifcode $\csowner = (1, p)$: \return $\gotocs$ \\
			\> \procline{abrt:abort:4} \> $\go[p] \gets -1$ \\
			\> \procline{abrt:abort:5} \> \return $\gotorem$\\
			\\ 
			\> \> \underline{{\bf procedure} $\promotep({\bf boolean} \ \isabortingp)$:} \\
			\> \procline{abrt:prom:1} \> $(\bitp, \myseqp) \leftarrow \csowner$; \ifcode $\bitp = 1$: \{ $\peerp \gets \myseqp$; \goto Line~\refln{abrt:prom:4} \} \\
			\> \procline{abrt:prom:2} \> $(\peerp, \tokp) \leftarrow \registry.\findmin()$; \ifcode $\tokp = \infty \wedge \isabortingp$: $\peerp \gets p$ \elsecode \ifcode $\tokp = \infty$: \return \\
			%\> \hspace*{1mm}$\lfloor$ \> \>  \ifcode $\tokp = \infty \wedge \isabortingp$: $\peerp \gets p$ \elsecode \ifcode $\tokp = \infty$: \return \\
			\> \procline{abrt:prom:3} \> \ifcode $\neg \cas(\csowner, (0, \myseqp), (1, \peerp))$: \return \\
			\> \procline{abrt:prom:4} \> $\mygop \leftarrow \go[\peerp]$; \ifcode $\mygop \in \{-1, 0 \}$: \return  \\
			\> \procline{abrt:prom:5} \> \ifcode $\csowner \neq (1, \peerp)$: \return \\
			\> \procline{abrt:prom:6} \> \cas$(\go[\peerp], \mygop, 0)$ 
		\end{tabbing}
		
		\caption{Abortable RME Algorithm for CC and DSM machines. Code for process $p$.}
		\label{algo:abrt} % TIP: label always appears after caption
	\end{footnotesize}
	\hrule
\end{figure}

\subsection{Shared variables and their purpose} \label{sec:sharedvars}
We describe below the role played by each shared variable used in the algorithm.

\begin{itemize}
\item
$\token$ is an unbounded positive integer.
A process $p$ reads this variable at the beginning of $\tryprocp()$ to obtain its token
and then increments, thereby ensuring that processes that invoke the try method later
will get a strictly bigger token.

\vspace{0.1in}
\item
$\csowner$ and $\seq$:
These two shared variables are used in conjunction, with
$\seq$ holding an unbounded integer and
$\csowner$ holding a pair, which is either $(\true, p)$ (for some $p \in \procset$) or $(\false, \seq)$.
If $\csowner = (\true, p)$, it means that $p$ owns the CS and, if $\csowner = (\false, \seq)$, it means that 
no process owns the CS.
If $\seq$ has a value $s$ while $p$ is the CS,
when exiting the CS $p$ increments $\seq$ to $s+1$ and writes $(0, s+1)$ in $\csowner$.
As we explain later, this act is crucial to ensuring that no process will be made the owner of the CS
after it has moved back to the remainder.

\vspace{0.1in}
\item
$\go[p]$ has one of three values --- $-1$, 0, or $p$'s token.
The algorithm ensures that $\go[p] = -1$ whenever $p$ is in the remainder ``normally'',
i.e., not because of a crash but because the try, exit, or recover method returned normally.
If $\go[p]=0$, it means that $p$ is made the owner of CS, hence $p$ has the permission to enter the CS.
After $p$ obtains a token in $\tryprocp()$, $p$ writes its token in $\go[p]$ 
and, subsequently when $p$ must wait for its turn to enter the CS,
it spins until either $\go[p]$ turns 0 or it receives a signal to abort.

\vspace{0.1in}
\item
$\minarray$ is a min-array object \cite{farrays} of $n$ locations that supports two operations:
\linebreak
$\registry[p].\mwrite(v)$, which can only be executed by process $p$, writes $v$ in $\registry[p]$;
and $\registry.\findmin()$ returns the minimum value in the array. 
After $p$ obtains a token $t$ in $\tryprocp()$, it announces its interest to capture the CS
by writing the pair $(p, t)$ in $\registry[p]$,
and when no longer interested, it takes itself out by writing $(p, \infty)$ in $\registry[p]$.
The ``less than'' relation on pairs is defined as follows:
$(p, t) < (p', t')$ if and only if $t < t'$ or $(t = t') \wedge (p < p')$.

\vspace{0.05in}
It turns out that the $\registry$ object has an implementation, using only read, write, and CAS operations,
with three nice properties  \cite{farrays}: it is linearizable, wait-free, and idempotent, i.e.,
if $p$ crashes while executing the method $\registry[p].\mwrite(v)$ and 
reexecutes the method once more upon restart, the effect is the same as executing the method once without ever crashing.
The implementation uses only $O(n)$ variables
and has only a logarithmic RMR complexity on a DSM or a Relaxed CC machine:
$\registry.\findmin()$ incurs $O(1)$ RMRs and $\registry[p].\mwrite(v)$ incurs $O(\min(k, \log n))$ RMRs,
where $k$ is the maximum point contention during the execution of $\registry[p].\mwrite(v)$.
The idempotence property of the implementation makes it suitable for use in our algorithm \cite{jayanti:fcfsmutex}.
\end{itemize}

\subsection{Informal description}

In this section we present an intuitive understanding of the algorithm
that explains the lines of code and, more importantly, draws attention to
potential race conditions and how the algorithm overcomes them.

\vspace{0.1in}
\noindent
{\bf Understanding $\tryprocp()$}

After a process $p$ invokes $\tryprocp()$, it reads and then attempts to increments $\token$
(Lines~\refln{abrt:try:1}, \refln{abrt:try:2}).
The attempt to increment serves two purposes.
First, if a different process $q$ invokes $\tryproc{q}()$ later,
it gets a strictly larger token, which helps realize FCFS.
Second, if $p$ were to abort its curent attempt $A$, it will obtain a strictly larger token in its next attempt $A'$,
which, as we will see, helps ensure that any process $q$ that might attempt to release $p$ from its busy-wait 
in the attempt $A$ will not accidentally release $p$ from its busy-wait in the attempt $A'$.
Process $p$ writes its token in $\go[p]$ (Line~\refln{abrt:try:3}), 
where it will later busy-wait until some process changes $\go[p]$ to 0, and 
then announces its interest in the CS by changing $\registry[p]$ from
$(p, \infty)$ to $(p, \mbox{its token})$ (Line~\refln{abrt:try:4}).
It then calls the $\promotep()$ procedure, which is crucial to ensuring livelock-freedom (Line~\refln{abrt:try:5}).

\vspace{0.1in}
\noindent
{\bf Understanding $\promotep()$}

The $\promotep()$ procedure's purpose is to push a waiting process into the CS, if the CS is unoccupied.
To this end, $p$ reads $\csowner$ (Line~\refln{abrt:prom:1}).
If it finds that the CS is already owned (i.e., $b_p = 1$), since it is possible that the owner 
$peer_p$ is still busywaiting unaware of its ownership,
$p$ jumps to Line~\refln{abrt:prom:4}, where the code to release $peer_p$ starts.
On the other hand, if the CS is unoccupied (i.e., $b_p = 0$), it executes Line~\refln{abrt:prom:2} 
to find out the process
that has the smallest token in the $\registry$, i.e., the process $peer_p$ that has been waiting the longest.
Since $\promotep()$ is called from $p$'s Line~\refln{abrt:try:5},
at which point $\registry[p]$ has a finite token number for $p$,
at Line~\refln{abrt:prom:2} we have $\tokp \neq \infty$.
So, $p$ proceeds to Line~\refln{abrt:prom:3}, where it attempts to launch $peer_p$ into the CS.
If $p$'s CAS fails, it means that someone else must have succeeded in launching
a process into the CS between $p$'s Line~\refln{abrt:prom:1} and Line~\refln{abrt:prom:3};
in this case $p$ has no further role to play, so it returns from the procedure.
On the other hand, if $p$'s CAS succeeds, which means that
$peer_p$ has been made the CS owner, $p$ has a responsibility to release $peer_p$ from its busywait,
i.e., $p$ must write 0 in $\go[peer_p]$.
However, there is potential for a nasty race condition here,
as explained by the following scenario:
some process different from $p$ releases $peer_p$ from its busywait;
$peer_p$ enters the CS and then exits to the remainder;
some other process $q$ is now in the CS;
$peer_p$ executes the try method once more and proceeds up to the point of busy-waiting.
Recall that $p$ is poised to write 0 in $\go[peer_p]$.
If $p$ does this writing, $peer_p$ will be released from its busywait,
so $peer_p$ proceeds to the CS, where $q$ is already present.
So, mutual exclusion is violated!
Our algorithm averts this disaster by exploiting the fact that, while $peer_p$ busywaits,
$\go[peer_p]$'s value is never the same between different attempts of $peer_p$.
Specifically, $p$ reads $\go[peer_p]$ (Line~\refln{abrt:prom:4});
if $g_p$ is $-1$ or 0, it means that $peer_p$ is not busywaiting, so $p$ has no role to play, hence it returns.
If things have moved on and $peer_p$ no longer owns the CS, then too $p$ has 
no role to play, hence it returns (Line~\refln{abrt:prom:5}).
Otherwise, there are two possibilities: either $\go[peer_p]$ is still $g_p$ or it has changed.
In the former case, $peer_p$ must be busywaiting, 
so it is imperative that $p$ takes the responsibility to release $peer_p$ (by changing $\go[peer_p]$ to 0).
In the latter case, $peer_p$ requires no help from $p$, so $p$ must not change $\go[peer_p]$
(in order to avoid the race condition described above).
This is precisely what the CAS at Line~\refln{abrt:prom:6} accomplishes.

\vspace{0.1in}
\noindent
{\bf The rest of  $\tryprocp()$}

Upon returning from $\promotep()$, $p$ busywaits until it reads a 0 in $\go[p]$ or it receives a request to abort
(Line~\refln{abrt:try:6}).
If $p$ reads a 0 in $\go[p]$, $p$ infers that it owns the CS, so $\tryprocp()$ returns $\gotocs$ (Line~\refln{abrt:try:7}).
If $p$ receives a request to abort, it calls $\abortprocp()$ (Line~\refln{abrt:try:8}),
which we describe next.

\vspace{0.1in}
\noindent
{\bf Understanding $\abortprocp()$}

To abort, $p$ writes $(p, \infty)$ to make it known to all that it has no interest in capturing the CS (Line~\refln{abrt:abort:1}).
If any process will invoke the promote procedure after this point,
it will not find $p$ in $\registry$, so it will not attempt to launch $p$ into the CS.
Does this mean that $p$ can now return to the remainder section?
The answer is a thundering no because there are two nasty race conditions that need to be overcome.

First, it is possible that, before $p$ performed Line~\refln{abrt:abort:1},
some process $q$ performed its Line~\refln{abrt:prom:2} to find $p$ in $\registry$,
and then successfully launched $p$ into the CS (by writing $(1,p)$ in $\csowner$).
Taking care of this scenario is easy: $p$ can read $\csowner$ and if $p$ finds that it owns the CS,
it can abort by simply returning $\gotocs$.

The second potential race is more subtle and harder to overcome.
As in the earlier scenario, suppose that, before $p$ performed Line~\refln{abrt:abort:1},
some process $q$ performed its Line~\refln{abrt:prom:2} to find $p$ in $\registry$
(i.e., $peer_q = p$).
Furthermore, suppose that $q$ is now at Line~\refln{abrt:prom:3} and $\csowner = (0, s_q)$.
So, after performing  Line~\refln{abrt:abort:1}, if $p$ naively returns to the remainder
and then $q$ performs Line~\refln{abrt:prom:3}, we would be in a situation
where $p$ has been made the CS owner after it was back in the remainder!

To overcome the above two race conditions, $p$ calls $\promotep(\true)$ (Line~\refln{abrt:abort:2}).

The parameter $\true$ conveys that the call is made by $p$ while aborting,
and has the following impact on how $p$ executes $\promotep()$:
if $p$ finds the CS to be unoccupied at Line~\refln{abrt:prom:1}
and finds $\registry$ to be empty at Line~\refln{abrt:prom:2},
to preempt the second race condition discussed above
(where some process $q$ is poised to launch $p$ into the CS),
$p$ will attempt to launch itself into the CS 
(by setting $peer_p$ to $p$ at Line~\refln{abrt:prom:2} and attempting to change
$\csowner$ to $(1, peer_p)$).
The key insight is that, after $p$ performs the CAS at Line~\refln{abrt:prom:3},
only two possibilities remain: either $p$ is already launched into the CS (i.e., $\csowner = (1,p)$) or
it is guaranteed that no process will launch $p$ into the CS.
In the former case, $\abortprocp()$ returns $\gotocs$ at Line~\refln{abrt:abort:3};
and in the latter case, since it is safe for $p$ to return to the remainder,
$\abortprocp()$ returns $\gotorem$ at Line~\refln{abrt:abort:5} after setting $\go[p]$ to $-1$
at Line~\refln{abrt:abort:4} (in order to respect the earlier mentioned invariant
that $\go[p] = -1$ whenever $p$ returns to the remainder normally).

\vspace{0.1in}
\noindent
{\bf Understanding $\exitprocp()$}

There are two routes by which $p$ might enter the CS.
One is the ``normal'' route where $p$ executes $\tryprocp()$ without aborting or crashing,
and $\tryprocp()$ returns $\gotocs$, thereby sending $p$ to the CS.
The second route is where $p$ receives an abort signal,
calls at Line~\refln{abrt:try:8} $\abortprocp()$, which returns $\gotocs$ at Line~\refln{abrt:abort:3},
causing $\tryprocp()$ also to return $\gotocs$ at Line~\refln{abrt:try:8}.
When $p$ is in the CS, $p$'s announcement in $\registry[p]$ (made at Line~\refln{abrt:try:4}),
would no longer be there if it entered the CS by the second route (because of Line~\refln{abrt:abort:1}),
but it would still be there if it entered the CS by the first route.
So, when $p$ exits the CS, it removes its announcement in $\registry[p]$ (Line~\refln{abrt:exit:1}).
It then increments the number in $\seq$
and gives up its ownership of the CS by changing $\csowner$ from $(1,p)$ to $(0, \seq)$
(Lines~\refln{abrt:exit:2},~\refln{abrt:exit:3},~\refln{abrt:exit:4}).
To launch a waiting process, if any, into the just vacated CS,
$p$ then executes $\promotep()$ (Line~\refln{abrt:exit:5}),
and returns to the remainder after setting $\go[p]$ to $-1$
at Line~\refln{abrt:exit:6} (in order to respect the earlier mentioned invariant
that $\go[p] = -1$ whenever $p$ returns to the remainder normally).

\vspace{0.1in}
\noindent
{\bf Understanding $\recoverprocp()$}

Process $p$ executes $\recoverprocp()$ when it restarts after a crash.
If $\go[p]$ has $-1$, $p$ infers that
either $\recoverprocp()$ was called when $\status_p = \good$ or
the most recent crash had occured early in $\tryprocp()$, so 
$\recoverprocp()$ simply sends $p$ back to the remainder (Line~\refln{abrt:rec:1}).
Otherwise, $\recoverprocp()$ simply calls $\abortprocp()$ (Line~\refln{abrt:rec:1}),
which does the needful.
In particular, if $p$ was in the CS at the most recent crash, then
$\csowner$ would have $(1,p)$, which causes $\abortprocp()$ to send $p$ back to the CS.
Otherwise, $\abortprocp()$ extricates $p$ from the algorithm, sending it either to the CS or to the remainder.

\section{Proof of Correctness}

Figure~\ref{inv:abrt} presents the invariant satisfied by the Abortable RME algorithm given in Figure~\ref{algo:abrt}.
We have written the 13 statements comprising the invariant with the following conventions.
All statements about process $p$ are universally quantified, i.e., $\forall p \in \procset$ is implicit
(these are Statements 3 through 11, and Statement 13).
The program counter for a process $p$, i.e., $\pcp$, can take any of the values from the set $[\refln{abrt:rem:1}, \refln{abrt:prom:6}]$.
However, when a call to procedure $\promotep()$ is made by $p$ and $p$ is executing one of the steps from Lines~\refln{abrt:prom:1}-\refln{abrt:prom:6}, 
for clearly conveying where the call was made from, we prefix the value of $\pcp$ with the line number from where $\promotep()$ was called, along with the scope resolution operator from C++, namely, ``::''.
Thus, $\pcp = \resolve{abrt:try:5}{abrt:prom:4}$ means $p$ called $\promotep()$ from Line~\refln{abrt:try:5} and is now executing Line~\refln{abrt:prom:4} in that call.
Sometimes, in the interest of brevity, we use the range operator, i.e., $[a, b]$, to convey something more than just saying the range of values from $a$ to $b$ (inclusive).
That is, if $\pcp \in [\refln{abrt:try:5}, \refln{abrt:try:7}]$, we also mean that $\pcp$ could take on values from $[\resolve{abrt:try:5}{abrt:prom:1}, \resolve{abrt:try:5}{abrt:prom:6}]$ because there is a call to $\promotep()$ at Line~\refln{abrt:try:5}.
Similarly, $\pcp \in [\refln{abrt:try:4}, \refln{abrt:try:5}]$ means that $\pcp$ takes on values from $[\resolve{abrt:try:5}{abrt:prom:1}, \resolve{abrt:try:5}{abrt:prom:6}]$ because, again, there is a call to $\promotep()$ at Line~\refln{abrt:try:5}.

\begin{figure}[!h]
	%	\newgeometry{left=3cm} %{top=5mm, bottom=10mm} %{left=3cm,bottom=0.1cm}
	\hrule
	\begin{footnotesize}
		\hrule
		\vspace{1mm}
		\begin{tabbing}
			\hspace{0in} \= {\bf Conditions:} \hspace{0.2in} \= \hspace{0.2in} \=  \hspace{0.2in} \= \hspace{0.2in} \= \hspace{0.2in} \=\\
		\end{tabbing}
		\vspace{-0.25in}
		\begin{enumerate}
			\item \cond{1} $\token \ge 1$
			
			\item \cond{2} $(\csowner = (0, \seq)) \, \vee \, (\exists q \in \procset, \, \csowner = (1,q))$
			
			\item \cond{5} $(-1 \le \go[p] < \token)$ $\wedge$ $(\pcp = \refln{abrt:try:4} \limplies \go[p] = \tokp)$ $\wedge$ $(\pcp \in [ \refln{abrt:try:5}, \refln{abrt:try:7} ] \limplies \go[p] \in \{0, \tokp\})$  \\
			\hspace*{3mm} $\wedge$ $(\pcp \in \{ \refln{abrt:try:8} \upto \refln{abrt:exit:6}, \refln{abrt:rec:2} \upto \refln{abrt:abort:4}, \refln{abrt:prom:1} \upto \refln{abrt:prom:6} \} \limplies \go[p] \neq -1)$ \\
			\hspace*{3mm} $\wedge$ $((\pcp \in \{ \refln{abrt:try:1} \upto \refln{abrt:try:3}, \refln{abrt:abort:5}  \} \vee (\pcp \in \{ \refln{abrt:rem:1}, \refln{abrt:rec:1} \} \wedge \status_p \in \{ \good, \recrem \})) \limplies \go[p] = -1)$  
			
			\item \cond{6} $( \exists t \in [1, \token-1] \cup \{\infty\}, \minarray[p] = (p,t))$\\ 
			\hspace*{3mm}  $\wedge$ $(\pcp \in [ \refln{abrt:try:5}, \refln{abrt:try:7} ] \limplies \minarray[p] = (p, \tokp))$\\
			\hspace*{3mm}  $\wedge$ $((\pcp \in \{\refln{abrt:try:4}, \refln{abrt:exit:2} \upto \refln{abrt:exit:6}, \refln{abrt:abort:2} \upto \refln{abrt:abort:4} \} \vee  \go[p] = -1)$ $\limplies$ 
			$\minarray[p] = (p, \infty))$
			
			\item \cond{7} $(((\pcp \in [\refln{abrt:try:5}, \refln{abrt:try:7}]$ $\wedge$ $\go[p] = 0) \vee \pcp \in [ \refln{abrt:cs:1}, \refln{abrt:exit:4} ] \vee \status_p = \reccs) \limplies \csowner = (1, p))$ \\
			\hspace*{3mm}  $\wedge$ $((\pcp \in  \{ \refln{abrt:try:4}, \refln{abrt:abort:4} \} \cup [ \refln{abrt:exit:5}, \refln{abrt:exit:6} ]$
			$\vee$ $ \go[p] = -1) $  $\limplies$ $\csowner \neq (1, p))$
			
			\item \cond{8} This condition states what values local variables of process $p$ take on. \\
			$(\pcp = \refln{abrt:try:2} \limplies 1 \leq \tokp \leq \token)$ $\wedge$ $(\pcp \in [\refln{abrt:try:3}, \refln{abrt:try:7}] \limplies 1 \leq \tokp < \token)$ \\
			\hspace*{3mm} $\wedge$ $(\pcp = \refln{abrt:exit:3} \limplies \myseqp = \seq)$ $\wedge$ $(\pcp =\refln{abrt:exit:4} \limplies \myseqp = \seq - 1)$ \\
			\hspace*{3mm} $\wedge$ $(\pcp \in [\resolve{abrt:try:5}{abrt:prom:1}, \resolve{abrt:try:5}{abrt:prom:6}] \cup [\resolve{abrt:exit:5}{abrt:prom:1}, \resolve{abrt:exit:5}{abrt:prom:6}] \limplies \isabortingp = \false)$ $\wedge$ $(\pcp \in [\resolve{abrt:abort:2}{abrt:prom:1}, \resolve{abrt:abort:2}{abrt:prom:6}] \limplies \isabortingp = \true)$\\
			\hspace*{3mm} $\wedge$ $(\pcp \in [\refln{abrt:prom:3}, \refln{abrt:prom:6}] \limplies \peerp \in \procset)$ \\
			\hspace*{3mm} $\wedge$ $(\pcp \in [\refln{abrt:try:1}, \refln{abrt:exit:6}] \limplies \status_p = \good)$
			
			\item \cond{9} $(\pcp = \refln{abrt:try:7} \limplies (\go[p] = 0 \vee \textnormal{abort was requested}))$ $\wedge$ $(\pcp = \refln{abrt:try:8} \limplies \textnormal{abort was requested})$
			
			\item \cond{10} 
			$\pcp \in \{ \refln{abrt:prom:2}, \refln{abrt:prom:3} \} \limplies (\myseqp \leq \seq$ $\wedge$ $(\forall q, \pc{q} \in \{ \refln{abrt:exit:3}, \refln{abrt:exit:4} \} \limplies \myseqp \leq \myseq{q}))$
			
			\item \cond{11} 
			$((\pcp =\refln{abrt:prom:2} \wedge \csowner = (0, \myseqp))$ $\limplies$ \\
			\hspace*{25mm}$\forall q, (\minarray[q] \neq (q, \infty) \limplies ( \pc{q} \in \{ \refln{abrt:try:5} \upto \refln{abrt:try:8}, \refln{abrt:rec:2}, \refln{abrt:abort:1} \} \vee (\pc{q} \in \{ \refln{abrt:rem:1}, \refln{abrt:rec:1} \} \wedge \go[q] \neq -1))))$ \\
			\hspace*{3mm} $\wedge$ $((\pcp = \refln{abrt:prom:3} \wedge \csowner = (0, \myseqp)) \limplies (\pc{\peerp} \in [ \refln{abrt:try:5}, \refln{abrt:try:7}] \cup \{ \refln{abrt:rec:2} \upto \refln{abrt:abort:2}, \resolve{abrt:abort:2}{abrt:prom:1}\}$ \\
			\hspace*{50mm} $\vee$ $(\pc{\peerp} \in \{ \resolve{abrt:abort:2}{abrt:prom:2}, \resolve{abrt:abort:2}{abrt:prom:3} \} \wedge \myseq{\peerp} = \myseqp)$ \\
			\hspace*{50mm} $\vee$ $(\pc{\peerp} \in \{ \refln{abrt:rem:1}, \refln{abrt:rec:1} \} \wedge \go[\peerp] \neq -1 )))$
			
			\item \cond{12} 
			$\pcp = \{ \refln{abrt:prom:5}, \refln{abrt:prom:6} \} \limplies 1 \leq \mygop < \token$ 
			
			\item \cond{13}
			$\pcp = \refln{abrt:prom:6} \limplies ((\pc{\peerp} \in \{ \refln{abrt:try:2}, \refln{abrt:try:3} \} \limplies 1 \leq \mygop < \tok{\peerp})$ \\
			\hspace*{20mm} $\wedge$ $(\pc{\peerp} = \refln{abrt:try:4} \limplies 1 \leq \mygop < \go[\peerp])$ \\
			\hspace*{20mm} $\wedge$ $((\pc{\peerp} \in \{ \refln{abrt:try:5}, \refln{abrt:try:6}, \refln{abrt:try:7} \} \wedge \mygop = \go[\peerp]) \limplies \csowner = (1,\peerp)))$
			
			\item \cond{3} If a process is registered, some $q$ is either in CS or can be counted on to launch a waiting process into CS. \\
			\\
			$\min(\minarray) \neq (*, \infty)$ $\limplies$ $\exists q, (\csowner = (1, q)$\\
			\hspace*{45mm} $\vee$ $(\pc{q} \in \{ \refln{abrt:rem:1}, \refln{abrt:rec:1} \} \wedge  \go[q] \neq -1) \vee \pc{q} \in \{ \refln{abrt:try:5}, \refln{abrt:exit:5}, \refln{abrt:rec:2} \upto \refln{abrt:abort:2}, \refln{abrt:prom:1} \}$ \\
			\hspace*{45mm} $\vee$ $(\pc{q} \in \{ \refln{abrt:prom:2}, \refln{abrt:prom:3} \} \wedge \csowner = (0, \myseq{q})))$
			
			\item \cond{4} If $p$ has the ownership of CS but $\go[p] \neq 0$, then there is some $q$ that can be counted on to set $\go[p]$ to $0$. \\
			\\
			$(\csowner = (1, p) \wedge \go[p] \neq 0)$ $\limplies$ $\exists q, (\pc{q} \in \{ \refln{abrt:rec:2} \upto \refln{abrt:abort:2}, \refln{abrt:prom:1} \}$ $\vee$ $(\pc{q} = \refln{abrt:prom:4} \wedge \peer{q} = p)$ \\
			\hspace*{55mm} $\vee$ $(\pc{q} \in \{ \refln{abrt:prom:5}, \refln{abrt:prom:6} \} \wedge \peer{q} = p \wedge \mygo{q} = \go[p])$ \\
			\hspace*{55mm} $\vee$ $(\pc{q} \in \{ \refln{abrt:rem:1}, \refln{abrt:rec:1} \} \wedge \go[q] \neq -1))$
		\end{enumerate}
		
		%\vspace*{-2mm}
		\caption{Invariant of the Abortable RME Algorithm from Figure~\ref{algo:abrt}.}
		\label{inv:abrt}
		\hrule
		\vspace*{-5mm}
	\end{footnotesize}
	%\restoregeometry
\end{figure}

\begin{lemma}[{\bf Mutual Exclusion}] \label{lem:abrt:mutex}
	At most one process is in the CS in any configuration of any run.
\end{lemma}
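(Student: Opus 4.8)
The plan is to obtain Mutual Exclusion as an immediate corollary of the invariant in Figure~\ref{inv:abrt}, with the genuine work being the proof that this invariant is maintained throughout every run. A process $p$ is in the CS precisely when $\pcp = \critsecp()$, that is, when $\pcp = \refln{abrt:cs:1}$, and the clause $\pcp \in [\refln{abrt:cs:1}, \refln{abrt:exit:4}] \limplies \csowner = (1, p)$ of Statement~5 specializes to $\pcp = \refln{abrt:cs:1} \limplies \csowner = (1, p)$. So suppose, for contradiction, that distinct processes $p \neq q$ are both in the CS in some configuration of some run. Applying this clause to each yields $\csowner = (1, p)$ and $\csowner = (1, q)$ in the same configuration; since \csowner\ is a single variable holding one value, $(1, p) = (1, q)$, forcing $p = q$, a contradiction. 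Hence at most one process is ever in the CS.

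The substance therefore reduces to establishing the invariant of Figure~\ref{inv:abrt}, which I would prove by induction on the length of the run. The base case is the initial configuration, where $\csowner = (0,1)$ and every $\pcp = \refln{abrt:rem:1}$, so each of the thirteen statements holds (vacuously, for those guarded by a program counter inside a method body). For the inductive step I would verify that every normal step and every crash step preserves the conjunction of all thirteen statements. Because the statements are tightly coupled---Statement~5 about \csowner\ leans on Statements~6, 8, and~9, which track \myseqp, the ordering of \myseqp against \seq, and the set of registered waiting processes---the induction must carry the whole invariant simultaneously rather than Statement~5 in isolation.

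I expect the main obstacle to be the inductive maintenance of Statement~5, which turns on two delicate transitions. The first is the CAS at Line~\refln{abrt:prom:3}, $\cas(\csowner, (0, \myseqp), (1, \peerp))$, the sole instruction that installs a CS owner: I would argue that for a fixed value $(0, \myseqp)$ of \csowner\ at most one such CAS can succeed, and that Statements~8 and~9 force the promoted $\peerp$ to be genuinely waiting, so that the antecedent of Statement~5 can become true for $\peerp$ without simultaneously becoming true for anyone else. The second is the exit hand-off at Lines~\refln{abrt:exit:2}--\refln{abrt:exit:4}, where \seq\ is incremented and \csowner\ is overwritten with $(0, \myseqp+1)$; the role of the strictly increasing \seq\ (pinned down by Statements~6 and~8) is to ensure that a stale CAS at Line~\refln{abrt:prom:3} by a lagging process, still carrying an old \myseqp, cannot re-grant ownership of a CS that has already been vacated. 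Verifying that these two mechanisms jointly keep $\csowner = (1, \cdot)$ pointing at a unique, legitimately waiting process across every interleaving is where essentially all the effort lies; the contradiction above is then one line.
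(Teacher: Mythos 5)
Your proposal is correct and takes essentially the same route as the paper: the paper's proof likewise assumes two distinct processes $p,q$ with $\pcp = \pc{q} = \refln{abrt:cs:1}$, invokes the very clause of invariant Statement~5 that you cite to get $\csowner = (1,p)$ and $\csowner = (1,q)$ simultaneously, and concludes by contradiction. The paper also defers the inductive proof of the invariant to a separate (omitted) lemma, so your additional sketch of that induction is supplementary rather than a divergence.
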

\begin{proof}
	Suppose there is a configuration $C$ such that two distinct processes $p$ and $q$ are in the CS,
	i.e., $\pcp  = \pc{q} = \refln{abrt:cs:1}$.
	By Condition~\ref{inv:abrt:cond7}, $\csowner = (1,p)$ and $\csowner = (1, q)$ in $C$, 
	which means $\csowner$ has two different values in the same configuration, a contradiction.
\end{proof}

\begin{lemma}[{\bf Bounded Exit}] \label{lem:abrt:boundedexit}
	There is an integer $b$ such that if in any run any process $p$ invokes
	and executes $\exitprocp()$ without crashing, the method completes in at most $b$ steps of $p$.
\end{lemma}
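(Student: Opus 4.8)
The plan is to exploit the fact that $\exitprocp()$ contains no busy-wait and no loop, so the only instructions whose step cost is not manifestly $O(1)$ are the two min-array operations it touches; I then bound those by appealing to the wait-freedom of the min-array implementation. Since the hypothesis assumes $p$ takes no crash step during this execution, I need only count the normal steps of the visible code.

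First I would note that $\exitprocp()$ is straight-line: Lines~\refln{abrt:exit:2},~\refln{abrt:exit:3},~\refln{abrt:exit:4}, and~\refln{abrt:exit:6} are each a single read or write executed exactly once in sequence, contributing four steps, and Line~\refln{abrt:exit:1} is a single invocation of $\registry[p].\mwrite((p,\infty))$. The remaining line,~\refln{abrt:exit:5}, is one call to $\promotep(\false)$, which I would bound next.

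Inspecting $\promotep()$ shows it is also loop-free: the only non-sequential control transfer is the forward \goto from Line~\refln{abrt:prom:1} to Line~\refln{abrt:prom:4}, and every other branch is an early \return. Hence, whatever values $p$ reads, it executes each of Lines~\refln{abrt:prom:1}--\refln{abrt:prom:6} at most once, so the body of $\promotep()$ contributes a bounded number of steps apart from the single $\registry.\findmin()$ call at Line~\refln{abrt:prom:2}.

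What remains, and what I expect to be the one genuinely non-syntactic step, is bounding the two min-array calls---$\registry[p].\mwrite((p,\infty))$ at Line~\refln{abrt:exit:1} and $\registry.\findmin()$ at Line~\refln{abrt:prom:2}---since the min-array is itself a CAS-based object whose methods expand into several low-level instructions. Here I would invoke the property, recalled in Section~\ref{sec:sharedvars}, that this implementation is wait-free: each of \mwrite and \findmin completes in a number of $p$'s own steps bounded by a constant that may depend on $n$ but is independent of the run and of the contention. Summing the constantly many steps of the visible code with the two bounded min-array calls yields a fixed integer $b$ bounding every crash-free execution of $\exitprocp()$, as required.
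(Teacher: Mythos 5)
Your proposal is correct and follows essentially the same route as the paper's proof: observe that $\exitprocp()$ and the nested $\promotep()$ are loop-free straight-line code, and bound the two min-array operations by the wait-freedom of the implementation (the paper cites the $O(\log n)$ step bound for $\registry[p].\mwrite()$ and constant cost for $\registry.\findmin()$). Your version is just slightly more explicit about the control flow of $\promotep()$.
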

\begin{proof}
	As explained earlier, the call to $\registry[p].\mwrite()$ at Line~\refln{abrt:exit:1} takes $O(\log n)$ steps.
	From an inspection of the algorithm we see that the rest of the execution of $\exitprocp()$ completes in a constant number of steps (this includes the execution of $\registry.\findmin()$ invoked from a call to $\promotep()$).
	It follows that for a certain constant $c$, the execution of $\exitprocp()$ completes in at most $c \log n$ steps in a run, if $p$ invokes and executes it without crashing.
\end{proof}

\begin{lemma}[{\bf First Come First Served}] \label{lem:abrt:fcfs}
	There is an integer $b$ such that in any run,
	if $A$ and $A'$ are attempts by any distinct processes $p$ and $p'$, respectively,
	$p$ performs at least $b$ consecutive normal steps in $A$ before the attempt $A'$ starts, 
	and $p$ neither receives an abort signal nor subsequently crashes in $\tryprocp()$ in $A$,
	then $p'$ does not enter the CS in $A'$ before $p$ enters the CS in $A$.
\end{lemma}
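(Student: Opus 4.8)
The plan is to leverage the token discipline of the algorithm: every process grabs a strictly increasing \token\ value at Lines~\refln{abrt:try:1}--\refln{abrt:try:2}, deposits it into the min-array \registry\ at Line~\refln{abrt:try:4}, and the \promotep\ procedure only ever hands the CS to the \emph{registered} process holding the smallest $(\cdot,\mathrm{token})$ pair. Accordingly, I would take $b$ large enough that $p$ completes Line~\refln{abrt:try:4} within its first $b$ normal steps of $A$; since $p$ does not crash in \tryprocp() during $A$, those steps are precisely the opening sequence try:1--try:4 (the write at Line~\refln{abrt:try:4} is wait-free, so a finite $b$ on the order of $\log n$ suffices). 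I would first fix the token order: \token\ is modified only by the \cas\ at Line~\refln{abrt:try:2}, always as $\cas(\token, t, t{+}1)$, so \token\ is nondecreasing, and by Condition~\ref{inv:abrt:cond8} we have $\tokp < \token$ once $p$ is past Line~\refln{abrt:try:2}. Because $A'$ starts only after $p$ has executed Line~\refln{abrt:try:2}, when $p'$ reads \token\ at its Line~\refln{abrt:try:1} it obtains $\tok{p'} > \tokp$, hence $(p,\tokp) < (p',\tok{p'})$ in the min-array order.

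Next, let $t^\ast$ be the first configuration of $A$ in which $p$ is in the CS. Since $p$ neither receives an abort signal nor crashes in \tryprocp() during $A$, it runs Lines~\refln{abrt:try:4}--\refln{abrt:try:7} without detour and enters the CS from Line~\refln{abrt:try:7}; the only instructions that reset $\registry[p]$ to $(p,\infty)$ are Lines~\refln{abrt:exit:1} and~\refln{abrt:abort:1}, and $p$ can reach neither before $t^\ast$. Using Condition~\ref{inv:abrt:cond6} I would conclude that $\registry[p] = (p,\tokp)$ at every step from $p$'s Line~\refln{abrt:try:4} (which precedes the start of $A'$) up to $t^\ast$. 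Therefore, at every such step $\registry.\findmin() \le (p,\tokp) < (p',\tok{p'})$, so a \findmin\ there can return neither $(p',\tok{p'})$ nor the empty value $(*,\infty)$.

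The crux is the claim that $\csowner = (1,p')$ holds at no step before $t^\ast$. The value $(1,p')$ can be installed into \csowner\ only by a successful \cas\ at some process $r$'s Line~\refln{abrt:prom:3} with $\peer{r}=p'$, and $\peer{r}$ is set to $p'$ at $r$'s Line~\refln{abrt:prom:2}: either because \findmin\ returned $(p',\tok{p'})$, or because \findmin\ returned $(*,\infty)$ and $r=p'$ self-assigned while aborting. (The only other way to obtain $\peer{r}=p'$, via the $\bit{r}=1$ branch of Line~\refln{abrt:prom:1}, copies $p'$ out of \csowner\ itself and so cannot produce the \emph{first} occurrence of $(1,p')$; moreover that branch jumps past Line~\refln{abrt:prom:3}.) In either case $r$'s Line~\refln{abrt:prom:2} reads a \findmin\ value equal to $(p',\tok{p'})$ or to $(*,\infty)$, which by the previous paragraph is impossible while $p$ is registered. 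Moreover this step follows $p$'s Line~\refln{abrt:try:4}: a \findmin\ returning $(p',\tok{p'})$ requires $p'$ to have registered at its own Line~\refln{abrt:try:4} of $A'$ (after $A'$ starts), while the self-assigning case has $r=p'$ executing in $A'$; either way the step occurs after $A'$ starts and hence after $p$ registers. Being after $p$'s Line~\refln{abrt:try:4} yet at a moment when $p$ is not registered, the step must occur after $t^\ast$, so \csowner\ first attains $(1,p')$ after $t^\ast$.

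It remains to tie $p'$'s entry into the CS to the value of \csowner. If $p'$ enters from Line~\refln{abrt:try:7}, then it read $\go[p']=0$; but $\go[\cdot]$ is written $0$ only at Line~\refln{abrt:prom:6}, which a process $q$ reaches only after checking $\csowner=(1,\peer{q})$ at Line~\refln{abrt:prom:5}, and here $\peer{q}=p'$, so $\csowner=(1,p')$ held at that step. If instead $p'$ enters from Line~\refln{abrt:abort:3}, it read $\csowner=(1,p')$ directly. Either way $\csowner=(1,p')$ holds at some step no later than $p'$'s entry, which by the crux is after $t^\ast$; thus $p$ is in the CS strictly before $p'$, proving the claim. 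I expect the crux (third paragraph) to be the main obstacle: it demands enumerating every route by which \csowner\ can become $(1,p')$ and then pinning down the \emph{timing} of the underlying \findmin\ relative to $p$'s registration window --- in particular, ruling out a ``stale'' \findmin\ value read before $p$ registered and consumed by a later \cas. The combination of registration-persistence up to $t^\ast$ and the observation that any \findmin\ returning $p'$ must occur after $p'$ (and therefore $p$) has registered is what closes that gap, and making this ordering argument airtight is the delicate step.
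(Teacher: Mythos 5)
Your proposal follows essentially the same route as the paper's proof: establish $\tokp<\tok{p'}$ from the token increment at Line~\refln{abrt:try:2}, use Condition~\ref{inv:abrt:cond6} to show $\registry[p]=(p,\tokp)$ persists until $p$ enters the CS, conclude that no \findmin\ in that window can return $(p',\tok{p'})$ or $(*,\infty)$, and hence that $\csowner$ cannot become $(1,p')$ (the sole gateway to $p'$ entering the CS, via Line~\refln{abrt:prom:3}) before $p$ enters the CS. The only nit is that you should reason about the transition of $\csowner$ to $(1,p')$ that is responsible for $p'$'s entry in $A'$ (as the paper does by picking the last such transition before $p'$ is in the CS) rather than the ``first occurrence of $(1,p')$'' in the run, since $p'$ may have owned the CS in an earlier attempt; this is a cosmetic repair that does not affect the substance of your argument.
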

\begin{proof}
	Let $B$ be the earliest configuration when $p$ has performed contiguous normal steps upto Line~\refln{abrt:try:6} during its attempt $A$ in which $p$ does not receive an abort signal and 
	$p'$ has not even initiated its attempt $A'$.
	Since $\pcp = \refln{abrt:try:6}$, by Condition \ref{inv:abrt:cond6}, $\minarray[p] = (p, \tokp)$ in $B$.
	Let $B'$ be the earliest configuration following $B$ when $p'$ has performed contiguous normal steps upto Line~\refln{abrt:try:6} during its attempt $A'$.
	Therefore, by the same argument, $\minarray[p'] = (p', \tok{p'})$ in $B'$.
	It follows from the premise of the lemma that $\tokp < \tok{p'}$ (since $p$ performed contiguous normal steps upto Line~\refln{abrt:try:6} even before $A'$ started, applying Condition~\ref{inv:abrt:cond8} right at the configuration when $p$ completes Line~\refln{abrt:try:2}, $\tokp < \token \leq \tok{p'}$).
	Assume the lemma is false.
	Therefore, there is a configuration $C$ following $B'$ such that $p'$ entered the CS during attempt $A'$
	before $p$ entered the CS during attempt $A$.
	Therefore, $\pc{p'} = \refln{abrt:cs:1}$ and $\csowner = (1, p')$ in $C$.
	It follows that there is a configuration between $B'$ and $C$, call it $C'$, 
	such that, $\csowner \neq (1, p')$ in configurations $B'$ to the one just before $C'$,
	and $\csowner = (1, p')$ in configurations $C'$ to $C$.
	Let process $p''$ be the one that changed $\csowner$ to $(1, p')$ in $C'$.
	$p''$ could have changed $\csowner$ this way only at Line~\refln{abrt:prom:3}, since no other step sets the first bit of $\csowner$ to $1$.
	It follows that $p''$ read the record of $p'$ from the $\registry.\findmin()$ it executed at Line~\refln{abrt:prom:2}, 
	or equally, $p' = p''$ and $p'$ itself set $\peer{p'} = p'$ because it found the $\registry$ to be empty (this could happen because $p'$ either crashed in $\tryproc{p'}()$ or aborted).
	In either case, since $\tokp < \tok{p'}$, the $\findmin()$ by $p''$  at Line~\refln{abrt:prom:2} could have received $(p', \tok{p'})$ (or $(q, \infty)$, for some $q$, denoting $\registry$ to be empty)
	if and only if $\minarray[p] \neq (p, \tokp)$ (i.e., it is either $(p, \infty)$ or $(p, x)$ for a token $x$ higher than $\tok{p'}$).
	This implies that $p$ either already left the CS from the attempt $A$ or crashed some time after configuration $B$ (and thereby removing its own entry from $\registry$ at Line~\refln{abrt:abort:1})
	before $p''$ executed the $\findmin()$ at Line~\refln{abrt:prom:2}, and hence before the configuration $C$ is reached.
	Therefore, we have the lemma.
\end{proof}

\begin{lemma}[{\bf Starvation Freedom}] \label{lem:abrt:starvfreedom}
	In every fair infinite run in which every attempt contains 
	only finitely many crash steps,
	if a process $p$ is in the Try section in a configuration, $p$ is in a different section in a later configuration.
\end{lemma}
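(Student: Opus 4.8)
The plan is to argue by contradiction, reducing starvation to a single busy-waiting process and then using the two ``progress'' invariants (Conditions~\ref{inv:abrt:cond3} and~\ref{inv:abrt:cond4}) together with fairness and the finite-crashes-per-attempt hypothesis to force that process into the CS. First I would observe that if $p$ stays in the Try section forever in a fair run, then $p$ neither crashes during this period (a crash would send $p$ into $\recoverprocp()$, hence out of Try) nor ever reads $\abort[p] = \true$ at Line~\refln{abrt:try:6} (otherwise it would proceed to $\abortprocp()$ at Line~\refln{abrt:try:8}, which is loop-free and terminates, returning $\gotocs$ or $\gotorem$). Consequently $p$ must be parked at the busy-wait of Line~\refln{abrt:try:6} with a fixed token $\tokp$ and $\go[p]\neq 0$ forever; and by Condition~\ref{inv:abrt:cond6}, $\minarray[p] = (p,\tokp)$ throughout, so $p$ stays registered. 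Among all processes that are stuck in Try forever in this run (at most $|\procset|$ of them, each in its final attempt), I would pick the one, $p$, whose pair $(p,\tokp)$ is smallest in the min-array order.

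Next I would show that $(p,\tokp)$ eventually becomes, and remains, the minimum registered pair. Since $\token$ is incremented at Line~\refln{abrt:try:2} and only ever increases, every registration made after $p$ executed Line~\refln{abrt:try:2} carries a token strictly larger than $\tokp$; hence any registered pair smaller than $(p,\tokp)$ belongs to a process $q$ whose token was obtained earlier. By the minimality of $p$, no such $q$ is stuck in Try forever, so each leaves Try, unregisters itself (Line~\refln{abrt:exit:1} or~\refln{abrt:abort:1}), and can only re-register with a token exceeding $\tokp$. As there are finitely many processes and finitely many token values below $\tokp$, after some configuration no registered pair is smaller than $(p,\tokp)$, so $\findmin$ returns $(p,\tokp)$ from then on.

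Working inside this suffix, I would drive the CS to $p$. If $\csowner = (1,r)$ for some $r\neq p$, then by Condition~\ref{inv:abrt:cond4} $r$ is itself released, enters the CS, and by fairness leaves it, freeing the CS at Line~\refln{abrt:exit:4}; thus the CS cannot stay occupied by a process other than $p$ forever. Whenever $\csowner = (0,\cdot)$, Condition~\ref{inv:abrt:cond3} (applicable since $p$ is registered) supplies a process $q$ poised to run $\promote{q}()$; by fairness $q$ reaches Line~\refln{abrt:prom:3}, where its $\findmin$ result is $(p,\tokp)$, and it attempts $\cas(\csowner,(0,\myseq{q}),(1,p))$. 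The delicate point is the sequence-number match: the CAS succeeds only if $\csowner = (0,\myseq{q})$, and a launcher carrying a stale $\seq$ fails and returns. I would handle this by noting that $\seq$ advances only at an exit (Line~\refln{abrt:exit:3}), so once the CS stops being occupied by processes other than $p$, $\seq$ stabilizes; Condition~\ref{inv:abrt:cond11} then guarantees that a launcher that read the current $\csowner$ has a matching $\myseq{q}$, so its CAS installs $\csowner = (1,p)$. Note that aborting processes are \emph{helpful} rather than harmful here: an aborter removes its own pair at Line~\refln{abrt:abort:1} and then invokes $\promote{}(\true)$ at Line~\refln{abrt:abort:2}, whose $\findmin$ still returns the registered $p$, so it launches $p$ rather than itself.

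Finally, once $\csowner = (1,p)$ while $p$ busy-waits with $\go[p] = \tokp \neq 0$, Condition~\ref{inv:abrt:cond4} guarantees a process $q$ committed to setting $\go[p]$ to $0$; at Line~\refln{abrt:prom:6} it executes $\cas(\go[p],\mygo{q},0)$ with $\mygo{q} = \go[p]$, succeeding and waking $p$. Since $\csowner = (1,p)$ and $\go[p]\neq 0$ persist until $p$ is woken (only $p$'s own exit could change $\csowner$, and only a successful release changes $\go[p]$), Condition~\ref{inv:abrt:cond4} keeps furnishing such a helper even across the finitely many crashes allowed in each attempt, and fairness forces the release to complete. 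Then $p$ reads $\go[p] = 0$ at Line~\refln{abrt:try:6} and leaves Try at Line~\refln{abrt:try:7}, contradicting that $p$ is stuck. I expect the main obstacle to be the third step: coordinating Conditions~\ref{inv:abrt:cond3} and~\ref{inv:abrt:cond11} to rule out an endless sequence of $\promote{}()$ calls whose CAS at Line~\refln{abrt:prom:3} fails on a stale $\seq$, which is precisely the livelock pattern that the per-attempt token increment is designed to break.
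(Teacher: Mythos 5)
Your proposal is correct and rests on the same foundation as the paper's proof---a contradiction argument powered by the ``progress'' clauses of the invariant---but it is organized differently and is substantially more thorough. The paper's proof uses the FCFS lemma (Lemma~\ref{lem:abrt:fcfs}) to jump directly to a quiescent configuration in which every active process is parked at Line~\refln{abrt:try:6} and every other process is in the Remainder with $\status_p = \good$, and then derives a contradiction from a single application of Condition~\ref{inv:abrt:cond3}. You instead select the starving process with the minimal registered pair, prove that this pair eventually becomes and remains the minimum of $\minarray$, and then run an explicit liveness chain: Condition~\ref{inv:abrt:cond3} to install $\csowner = (1,p)$, followed by Condition~\ref{inv:abrt:cond4} to force $\go[p] = 0$. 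What your version buys is rigor at exactly the points the paper elides: the paper does not justify why its quiescent configuration exists (other processes may keep aborting, re-entering, or crashing, and the CS may be held by a crashed owner that must first be revived and marched through $\exitproc{}$), it never invokes Condition~\ref{inv:abrt:cond4} to explain how a waiting owner is actually released, and it does not confront the possibility that every launcher's CAS at Line~\refln{abrt:prom:3} fails on a stale sequence number---the livelock you correctly identify as the crux and dispose of by observing that $\seq$ stabilizes once non-$p$ owners stop cycling through the CS. The cost is length, plus one small miscue: Condition~\ref{inv:abrt:cond11} is not quite the right tool for the sequence-number match (it constrains the state of $\peerp$, not $\myseq{q}$); the cleaner observation is that once $\csowner$ is stable at $(0,s)$, any helper that subsequently executes Line~\refln{abrt:prom:1} reads $s$ into $\myseq{q}$ fresh, so its CAS at Line~\refln{abrt:prom:3} cannot fail, and Condition~\ref{inv:abrt:cond3} together with fairness guarantees such a helper arrives.
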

\begin{proof}
	Suppose the claim is false.
	Therefore, there is a fair infinite run in which a process $p$ starts an attempt and never leaves the $\tryprocp()$ procedure,
	i.e., it forever loops in the procedure at Line~\refln{abrt:try:6} after a certain configuration (this follows from the fact that every attempt contains only a finitely many crash steps).
	Let $C$ be the earliest configuration of the run such that $p$ forever waits at Line~\refln{abrt:try:6} after $C$,
	all other processes are either waiting with $p$ at Line~\refln{abrt:try:6} or are in the Remainder Section with $\status_p = \good$, and no process in the Recover, CS, or Exit Section.
	Such a configuration would exist because there are a finite number of processes each crashing finitely many times,
	and by Lemma~\ref{lem:abrt:fcfs} the algorithm satisfies the First Come First Served property.
	Therefore, without loss of generality, $p$ be the process so that no other process can enter the CS before $p$ enters it.
	It follows that $\csowner = (0, k)$, for some integer $k$, and $\registry[p] = (p, \tokp)$ from $C$ onwards.
	By Condition~\ref{inv:abrt:cond7} it follows that some process is either in CS or can be counted on to launch a waiting process into CS.
	This is a contradiction to our assumption that in $C$ all the processes that are active in an attempt are waiting at Line~\refln{abrt:try:6}.
\end{proof}

\begin{lemma}[{\bf Bounded Recovery to CS}] \label{lem:abrt:wfcsr}
	There is an integer $b$ such that if in any run any process $p$ 
	executes $\recoverprocp()$ without crashing and with $\status_p = \reccs$, 
	the method completes in at most $b$ steps of $p$, returning $\gotocs$.
\end{lemma}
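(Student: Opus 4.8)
The plan is to trace $p$'s crash-free execution of $\recoverprocp()$ under the hypothesis $\status_p = \reccs$, showing that it branches into $\abortprocp()$ and returns $\gotocs$ within a bounded number of steps. First I would pin down the state in which $p$ begins recovery. Since $\status_p = \reccs$, the first clause of Condition~\ref{inv:abrt:cond7} gives $\csowner = (1, p)$, and its second clause ($\go[p] = -1 \limplies \csowner \neq (1,p)$, equivalently $\csowner = (1,p) \limplies \go[p] \neq -1$) then yields $\go[p] \neq -1$. Hence at Line~\refln{abrt:rec:1} the test $\go[p] = -1$ fails, so $p$ does not return $\gotorem$; it proceeds to Line~\refln{abrt:rec:2} and calls $\abortprocp()$.

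Next I would establish the key stability fact, namely that $\csowner = (1, p)$ persists throughout $p$'s entire execution of $\recoverprocp()$. The only instruction that writes a value of the form $(0, \cdot)$ into $\csowner$ is Line~\refln{abrt:exit:4}, and the only instruction that overwrites $\csowner$ with a $(1,\cdot)$ value is the CAS at Line~\refln{abrt:prom:3}, whose expected operand $(0, \myseqp)$ cannot match $(1,p)$ and so fails. By Condition~\ref{inv:abrt:cond7}, any process $q$ sitting at Line~\refln{abrt:exit:4} satisfies $\csowner = (1,q)$; combined with $\csowner = (1,p)$ this forces $q = p$, but $p$ is executing $\recoverprocp()$ (with $\status_p = \reccs$), not $\exitprocp()$, a contradiction. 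Thus no process removes $p$'s ownership, and $\csowner$ stays $(1,p)$.

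With this in hand, I would trace $\abortprocp()$ and count steps. The write of $(p, \infty)$ at Line~\refln{abrt:abort:1} costs $O(\log n)$ steps. Then $\promotep(\true)$ at Line~\refln{abrt:abort:2} reads $\csowner = (1,p)$ at Line~\refln{abrt:prom:1}, so $\bitp = 1$ and control jumps directly to Line~\refln{abrt:prom:4}, bypassing the $\findmin$ at Line~\refln{abrt:prom:2}; Lines~\refln{abrt:prom:4}--\refln{abrt:prom:6} then execute in a constant number of steps and, by the previous paragraph, cannot alter $\csowner$. Returning to $\abortprocp()$, the test $\csowner = (1,p)$ at Line~\refln{abrt:abort:3} succeeds, so $p$ returns $\gotocs$. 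The only nonconstant contribution is the $\registry$ write, giving the bound $b = O(\log n)$.

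The step I expect to be the main obstacle is the stability argument for $\csowner$: ruling out every way another process might strip $p$ of CS ownership (via Line~\refln{abrt:exit:4}) or reassign ownership (via the failing CAS at Line~\refln{abrt:prom:3}) while $p$ recovers. Once that is settled, the remainder is a direct walk through straight-line code, with each branch resolved by Condition~\ref{inv:abrt:cond7}.
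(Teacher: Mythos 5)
Your proposal is correct and follows essentially the same route as the paper: both rest on Condition~\ref{inv:abrt:cond7} to obtain $\csowner = (1,p)$ and $\go[p] \neq -1$, argue that no other process can disturb $\csowner$ (only Line~\refln{abrt:exit:4} and the CAS at Line~\refln{abrt:prom:3} touch it), and then trace $\recoverprocp()$ through $\abortprocp()$ and $\promotep(\true)$ to the $\gotocs$ return at Line~\refln{abrt:abort:3}. If anything, your version is slightly cleaner, since you apply the invariant's $\status_p = \reccs$ clause directly at the configuration where the crash-free recovery begins, whereas the paper reasons forward from the pre-crash configuration through any intermediate crashed passages.
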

\begin{proof}
	For $p$ to execute $\recoverprocp()$ with $\status_p = \reccs$, it must have crashed in the CS before.
	Let $C$ be a configuration prior to a crash step when $p$ is in the CS, i.e., $\pcp = \refln{abrt:cs:1}$ in $C$.
	By Condition~\ref{inv:abrt:cond7}, $\csowner = (1, p)$, and, by Condition~\ref{inv:abrt:cond5}, $\go[p] \neq -1$  in $C$.
	Without loss of generality, let $C'$ be the first configuration of a passage following $C$, such that, $p$ executes Line~\refln{abrt:abort:3} in this passage due to a call to $\abortprocp()$ from $\recoverprocp()$ in this passage.
	That is all passages, if any, between $C$ and $C'$ ended with a crash in $\recoverprocp()$ (or a crash within the nested call to $\abortprocp()$) before reaching and executing Line~\refln{abrt:abort:3}.
	Also note, by the description of $\status_p$, it will retain the value $\reccs$ even up to $C'$.
	Since no other process except for $p$ itself sets the value of $\go[p]$ to $-1$ at Lines~\refln{abrt:exit:6} and \refln{abrt:abort:4}, such a configuration is reachable in a bounded number of steps.
	It follows by the similar argument that $\csowner$ retains the value $(1, p)$ up to $C'$, because no other process can write a value $(0, k)$ at Line~\refln{abrt:exit:4}, for some integer $k$, 
	so that subsequently some process can perform the $\cas$ at Line~\refln{abrt:prom:3}.
	Thus starting at configuration $C'$, $p$ starts executing $\recoverprocp()$ and reaches Line~\refln{abrt:abort:3}.
	At Line~\refln{abrt:abort:3} $p$ notices that $\csowner = (1, p)$ and it returns from $\abortprocp()$ and subsequently from $\recoverprocp()$ with the value $\gotocs$.
	From an inspection of the algorithm we note that this happens within a constant number of steps from $C'$.
	The claim thus follows.
\end{proof}

\begin{lemma}[{\bf Critical Section Reentry}] \label{lem:abrt:csr}
	In any run, if a process $p$ crashes while in the CS, 
	no other process enters the CS until $p$ subsequently reenters the CS.
\end{lemma}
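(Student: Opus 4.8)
The plan is to show that $\csowner = (1, p)$ in every configuration from the moment $p$ crashes in the CS until the moment $p$ reenters it; the lemma then follows at once, since by Condition~\ref{inv:abrt:cond7} any process $q \neq p$ in the CS would force $\csowner = (1, q)$, contradicting $\csowner = (1, p)$.

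First I would pin down the base fact. When $p$ crashes while in the CS it has $\pcp = \refln{abrt:cs:1}$, so by Condition~\ref{inv:abrt:cond7} we have $\csowner = (1, p)$, and by the crash semantics $\status_p$ is set to $\reccs$. I would then observe that $\status_p = \reccs$ is ``sticky'': it is preserved across every subsequent step until $p$ completes a crash-free execution of $\recoverprocp()$, because a crash while $\status_p \neq \good$ leaves $\status_p$ unchanged, and no instruction of $\recoverprocp()$ or the nested $\abortprocp()$ resets $\status_p$ before returning. Consequently $\status_p = \reccs$ holds throughout the whole recovery interval, and invoking Condition~\ref{inv:abrt:cond7} once more---this time through its $\status_p = \reccs \limplies \csowner = (1, p)$ clause---gives $\csowner = (1, p)$ in every configuration of that interval. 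The handoff at the end is seamless: the step that returns $\gotocs$ simultaneously places $p$ in the CS and sets $\status_p = \good$, and in the resulting configuration Condition~\ref{inv:abrt:cond7} keeps $\csowner = (1, p)$ via the $\pcp \in [\refln{abrt:cs:1}, \refln{abrt:exit:4}]$ clause.

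Finally, that $p$ actually reenters the CS (rather than drifting to the remainder) is exactly Lemma~\ref{lem:abrt:wfcsr}: executed with $\status_p = \reccs$, $\recoverprocp()$ returns $\gotocs$. The only genuinely delicate point---and the reason Condition~\ref{inv:abrt:cond7} is nontrivial to sustain as an invariant---is that $p$ itself passes through $\promotep(\true)$ from Line~\refln{abrt:abort:2} during recovery, and Line~\refln{abrt:prom:3} is one of only two instructions that can ever overwrite $\csowner$ (the other being Line~\refln{abrt:exit:4}, reachable only by the current owner while exiting). I expect this self-step to be the main obstacle to a from-scratch operational argument: one must note that, reading $\csowner = (1, p)$ at Line~\refln{abrt:prom:1}, $p$ finds $\bitp = 1$ and jumps straight to Line~\refln{abrt:prom:4}, bypassing the CAS at Line~\refln{abrt:prom:3}; thus $p$ never disturbs $\csowner$, and neither does any other process, whose only opportunities are a failing CAS at Line~\refln{abrt:prom:3} (since $\csowner \neq (0, s)$ for any $s$) or an assignment at Line~\refln{abrt:exit:4} that Condition~\ref{inv:abrt:cond7} forbids for anyone other than the owner $p$.
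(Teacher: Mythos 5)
Your proof is correct, but it is a genuinely fuller argument than the one the paper gives: the paper disposes of this lemma in one line, declaring it ``immediate'' from Mutual Exclusion (Lemma~\ref{lem:abrt:mutex}) and Bounded Recovery to CS (Lemma~\ref{lem:abrt:wfcsr}). That one-liner actually glosses over the point you make central: while $p$ is crashed, $\pcp = \remprocp()$, so $p$ is not ``in the CS'' and Lemma~\ref{lem:abrt:mutex} alone cannot exclude another process from entering during the gap. The real content is the persistence of $\csowner = (1,p)$ across the crash interval, which the paper establishes only operationally and only inside the proof of Lemma~\ref{lem:abrt:wfcsr} (the argument that no process can write $(0,k)$ at Line~\refln{abrt:exit:4} and hence no CAS at Line~\refln{abrt:prom:3} can succeed). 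You instead extract this persistence directly from the $\status_p = \reccs \limplies \csowner = (1,p)$ clause of Condition~\ref{inv:abrt:cond7}, together with the stickiness of $\status_p = \reccs$ under the crash semantics, and then get exclusion of any $q \neq p$ from the same condition's $\pc{q} \in [\refln{abrt:cs:1}, \refln{abrt:exit:4}]$ clause; Lemma~\ref{lem:abrt:wfcsr} is used only for the guarantee that $p$ does eventually reenter. This buys a cleaner separation of concerns (the invariant carries the safety argument; the recovery lemma carries only liveness/boundedness) at the cost of leaning on a clause of the invariant whose inductive proof the paper omits. Your closing remarks about Lines~\refln{abrt:prom:1}--\refln{abrt:prom:3} and \refln{abrt:exit:4} correctly identify where that inductive burden lives, though they are commentary rather than a needed step once Condition~\ref{inv:abrt:cond7} is taken as given. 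One small point worth stating explicitly: between the crash and the completion of $\recoverprocp()$ there may be several passages (further crashes inside $\recoverprocp()$ or the nested $\abortprocp()$), and your stickiness observation is exactly what covers all of them, so the argument is complete as written.
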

\begin{proof}
	Immediate from Lemma~\ref{lem:abrt:mutex} and \ref{lem:abrt:wfcsr}.
\end{proof}

\begin{lemma}[{\bf Bounded Recovery to Exit}] \label{lem:abrt:boundedrec2csexit}
	There is an integer $b$ such that if in any run any process $p$ 
	executes $\recoverprocp()$ without crashing and with $\status_p = \recexit$, 
	the method completes in at most $b$ steps of $p$.
\end{lemma}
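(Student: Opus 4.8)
The plan is to establish the bound by a purely structural, code-inspection argument on the procedures that $\recoverprocp()$ can invoke, exactly in the spirit of the proof of Lemma~\ref{lem:abrt:boundedexit}. The key observation is that, in contrast with Lemma~\ref{lem:abrt:wfcsr}, this lemma asserts \emph{only} a bound on the number of steps and makes no claim about the value $\recoverprocp()$ returns. Consequently I expect to need no invariant reasoning about $\csowner$ or $\go[p]$: it suffices to count the instructions $p$ can execute along every control-flow path through $\recoverprocp()$, $\abortprocp()$, and $\promotep()$.

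When $p$ invokes $\recoverprocp()$ it begins at Line~\refln{abrt:rec:1}, reads $\go[p]$, and branches, so I would split on the value read. If $\go[p] = -1$, then $\recoverprocp()$ returns $\gotorem$ in a single step. Otherwise $p$ falls through to Line~\refln{abrt:rec:2}, which tail-calls $\abortprocp()$, and I would then trace $\abortprocp()$ together with its nested call $\promotep(\true)$ at Line~\refln{abrt:abort:2}. The heart of the matter is that $\abortprocp()$ and $\promotep()$ are both loop-free: each is straight-line code with only forward early-exits and the single forward jump from Line~\refln{abrt:prom:1} to Line~\refln{abrt:prom:4}, so the number of distinct lines executed is a fixed constant. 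The only instructions whose cost is not $O(1)$ are the min-array operations---the $\registry[p].\mwrite((p,\infty))$ at Line~\refln{abrt:abort:1}, costing $O(\log n)$ steps, and the $\registry.\findmin()$ at Line~\refln{abrt:prom:2}, which (as noted in the proof of Lemma~\ref{lem:abrt:boundedexit}) costs $O(1)$ steps. Summing, $\abortprocp()$, and hence $\recoverprocp()$, completes in $O(\log n)$ steps, so taking $b = c\log n$ for a suitable constant $c$ covers both cases.

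I do not anticipate a real obstacle, since the claim is essentially a syntactic step count. The one point deserving care is verifying that \emph{every} path through $\promotep(\true)$ terminates without revisiting a line---in particular that the aborting branch at Line~\refln{abrt:prom:2} (which, because $\isabortingp = \true$, sets $\peerp \gets p$ rather than returning) still proceeds only forward through Lines~\refln{abrt:prom:3}--\refln{abrt:prom:6} before exiting. It is also worth remarking explicitly why $b$ may legitimately grow with $n$ here, via the $O(\log n)$ write, whereas property~P9 (for $\status_p \in \{\good, \recrem\}$) demands an absolute constant: in that regime Condition~\ref{inv:abrt:cond5} forces $\go[p] = -1$, so $p$ exits at Line~\refln{abrt:rec:1} and never reaches the costly write.
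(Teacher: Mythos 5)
Your proposal is correct and takes essentially the same approach as the paper, which likewise argues by inspection that every execution path through $\recoverprocp()$, $\abortprocp()$, and $\promotep()$ is loop-free and hence completes in a bounded number of steps. Your version is in fact somewhat more careful than the paper's one-line argument, in particular in accounting explicitly for the $O(\log n)$ cost of the $\registry[p].\mwrite()$ call and in noting why no claim about the return value (and hence no invariant reasoning) is needed.
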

\begin{proof}
	By an inspection of the algorithm, specifically that of $\recoverprocp()$, $\abortprocp()$, and $\promotep()$, we note that any execution path that $p$ takes after crashing with $\status_p = \recexit$,
	if $p$ executes $\recoverprocp()$ without crashing, then it completes the method in a constant number of steps.
\end{proof}

\begin{lemma}[{\bf Fast Recovery to Remainder}] \label{lem:abrt:fastrec2rem}
	There is a constant $b$ (independent of $|\procset|$) such that if in any run any process $p$ 
	executes $\recoverprocp()$ without crashing and with $\status_p \in \{\good, \recrem\}$, 
	the method completes in at most $b$ steps of $p$.
\end{lemma}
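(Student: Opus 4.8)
The plan is to show that a process $p$ running $\recoverprocp()$ crash-free with $\status_p \in \{\good, \recrem\}$ returns $\gotorem$ at the very first line it executes, namely Line~\refln{abrt:rec:1}, so the method finishes in a single step and we may take $b$ to be an absolute constant (indeed $b=1$), independent of $|\procset|$.

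First I would pin down the configuration at which $p$ is poised to execute Line~\refln{abrt:rec:1}. When $p$ invokes $\recoverprocp()$, its program counter is set to Line~\refln{abrt:rec:1}, and entry into the method does not disturb $\status_p$: by the normal-step semantics of the model, $\status_p$ is reset to $\good$ only upon \emph{return} from $\recoverprocp()$, never upon entry. Hence in the configuration just before $p$ executes Line~\refln{abrt:rec:1} we have $\pcp = \refln{abrt:rec:1}$ together with $\status_p \in \{\good, \recrem\}$.

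The crux is then a direct appeal to the invariant. The last conjunct of Condition~\ref{inv:abrt:cond5} asserts precisely that $(\pcp \in \{\refln{abrt:rem:1}, \refln{abrt:rec:1}\} \wedge \status_p \in \{\good, \recrem\}) \limplies \go[p] = -1$, and applying it at the configuration above yields $\go[p] = -1$. (It is worth noting that $\go[p]$ resides in NVRAM and survives any crash that may have set $\status_p$ to $\recrem$, so this value is meaningful across crashes.) Consequently the guard at Line~\refln{abrt:rec:1} holds, $p$ returns $\gotorem$ in that one step, and it never reaches the call to $\abortprocp()$ at Line~\refln{abrt:rec:2}. Since $\status_p$ is unchanged throughout, the same reasoning covers the $\good$ and $\recrem$ cases uniformly, giving the claimed constant bound.

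I do not anticipate a genuine obstacle: the real work is quarantined inside the separately-established invariant, in particular Condition~\ref{inv:abrt:cond5}. The only care needed is to match that condition's hypothesis exactly---confirming that $p$ is truly at Line~\refln{abrt:rec:1} (not already inside the nested $\abortprocp()$, where Condition~\ref{inv:abrt:cond5} would instead force $\go[p] \neq -1$) and that entry into the method has not prematurely reset $\status_p$ to $\good$. Once this bookkeeping is settled the conclusion is immediate, which is exactly why $b$ can be chosen independent of $|\procset|$, as ``fast'' recovery demands.
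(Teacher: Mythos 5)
Your proposal is correct and follows essentially the same route as the paper: both rest on the final conjunct of Condition~\ref{inv:abrt:cond5}, which forces $\go[p] = -1$ when $p$ is at Line~\refln{abrt:rec:1} (or in the Remainder) with $\status_p \in \{\good, \recrem\}$, so the test at Line~\refln{abrt:rec:1} succeeds and $\recoverprocp()$ returns $\gotorem$ in a constant number of steps. Your version is slightly more careful about the bookkeeping (that entering the method does not reset $\status_p$, and that the condition is applied at Line~\refln{abrt:rec:1} rather than at the Remainder), but the substance is identical.
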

\begin{proof}
	By Condition~\ref{inv:abrt:cond5} we note that $\go[p] = -1$ when $\pcp = \refln{abrt:rem:1}$ and $\status_p \in \{ \good, \recrem \}$.
	It follows that if $p$ executes $\recoverprocp()$, it notices $\go[p] = -1$ at Line~\refln{abrt:rec:1} and immediately returns to the Remainder.
\end{proof}

\begin{lemma}[{\bf Bounded Recovery to Remainder}] \label{lem:abrt:boundedrec2rem}
	There is an integer $b$ such that if in any run
	$\recoverprocp()$, executed by a process $p$ with $\status_p = \rectry$, returns $\gotorem$,
	$p$ must have completed that execution of $\recoverprocp()$ in at most $b$ of its steps.
\end{lemma}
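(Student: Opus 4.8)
The plan is to prove this by a direct inspection of control flow, exploiting the fact that the code path traversed by $\recoverprocp()$ contains neither a loop nor a busy-wait. First I would pin down exactly where a $\gotorem$ return can originate. When $p$ invokes $\recoverprocp()$ with $\status_p = \rectry$, there are only two such exit points: either $\go[p] = -1$ holds at Line~\refln{abrt:rec:1}, in which case the method returns $\gotorem$ after a single step; or $\go[p] \neq -1$, in which case $p$ falls through to Line~\refln{abrt:rec:2} and executes the nested call $\abortprocp()$, whose unique $\gotorem$ exit is Line~\refln{abrt:abort:5}. So it suffices to bound the length of this second route.

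Next I would trace that route and verify it is entirely straight-line. Inside $\abortprocp()$, $p$ performs the min-array write at Line~\refln{abrt:abort:1}, calls $\promotep(\true)$ at Line~\refln{abrt:abort:2}, tests $\csowner$ at Line~\refln{abrt:abort:3}, and (since it does not return $\gotocs$ on the $\gotorem$ route) executes $\go[p] \gets -1$ and the return at Lines~\refln{abrt:abort:4}--\refln{abrt:abort:5}. The key point is that $\promotep()$ has no backward branch: the sole control transfer is the forward $\goto$ from Line~\refln{abrt:prom:1} to Line~\refln{abrt:prom:4}, and every other line either falls through or returns, so a single crash-free call visits each of Lines~\refln{abrt:prom:1}--\refln{abrt:prom:6} at most once. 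I would also stress that this path never reaches the busy-wait at Line~\refln{abrt:try:6}, which is the only \waittill statement in the algorithm and lies outside $\recoverprocp()$, $\abortprocp()$, and $\promotep()$. Hence the whole execution visits only constantly many distinct lines.

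Finally I would bound the per-line cost. Every visited line executes a constant number of $p$'s steps except the two min-array operations on this path, namely $\registry[p].\mwrite((p,\infty))$ at Line~\refln{abrt:abort:1} and $\registry.\findmin()$ at Line~\refln{abrt:prom:2}; because the $\registry$ implementation is wait-free, each of these completes in a bounded number of $p$'s steps ($O(\log n)$ for the write and $O(1)$ for $\findmin$). Summing over the constantly many lines gives a bound $b$ of order $\log n$, establishing the claim. The main obstacle here is not technical but a matter of careful bookkeeping: one must confirm that no path through $\recoverprocp()$, $\abortprocp()$, or $\promotep()$ ever re-enters an already-executed line, so that ``no loop'' can be asserted with full confidence; in particular the $\goto$ at Line~\refln{abrt:prom:1} must be checked to move control strictly forward. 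Once that is settled, the wait-freedom of the min-array operations supplies the quantitative bound, and (apart from the elementary use of Condition~\ref{inv:abrt:cond5} to note that $\go[p]=-1$ is possible at Line~\refln{abrt:rec:1}) no appeal to the global invariant is needed.
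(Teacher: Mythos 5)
Your proposal is correct and takes essentially the same route as the paper, which also argues by direct inspection that every $\gotorem$-returning path through $\recoverprocp()$ (and the nested $\abortprocp()$/$\promotep()$ calls) is loop-free and hence bounded. Your version is simply a more detailed and careful elaboration of that inspection, including the correct observation that the bound is governed by the $O(\log n)$ cost of $\registry[p].\mwrite$ rather than being an absolute constant, which is still consistent with the lemma's requirement of some integer $b$.
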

\begin{proof}
	From an inspection of the algorithm we note that any execution path that $p$ takes when it returns $\gotorem$ from $\recoverprocp()$,
	it must have done so in a constant number of steps from the latest step when it invoked $\recoverprocp()$.
	Thus the claim follows.
\end{proof}

\begin{lemma}[{\bf Bounded Abort}] \label{lem:abrt:boundedabort}
	There is an integer $b$ such that, for each $R, C, p$,
	if $\beta(R, p, C)$ is true,
	$\abortsigp$ stays $\true$ for ever (i.e., stays true in the suffix $R'$ of the run from $C$), 
	and $p$ executes steps without crashing (i.e., $p$ has no crash steps in $R'$),
	then $p$ enters either the CS or the remainder in at most $b$ of its steps (in $R'$).
\end{lemma}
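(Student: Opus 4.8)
The plan is to exploit a structural observation: under the lemma's hypotheses the only code $p$ can execute is the $\tryprocp()$, $\recoverprocp()$, $\abortprocp()$, and $\promotep()$ procedures, and all of this code is loop-free except for the single busy-wait at Line~\refln{abrt:try:6}. First I would record the routine bounds on the straight-line pieces. The procedures $\promotep()$ (Lines~\refln{abrt:prom:1}--\refln{abrt:prom:6}) and $\abortprocp()$ (Lines~\refln{abrt:abort:1}--\refln{abrt:abort:5}) contain no loops, and the only costly operations inside them are $\registry[p].\mwrite()$ and $\registry.\findmin()$, which complete in $O(\log n)$ and $O(1)$ steps respectively; hence each of $\promotep()$ and $\abortprocp()$, and likewise the prefix of $\tryprocp()$ from any of its lines up to Line~\refln{abrt:try:6}, completes in $O(\log n)$ of $p$'s steps.

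The crux is Line~\refln{abrt:try:6}, the one place where $p$ could wait indefinitely. Here I would invoke the hypothesis directly: since $\abortsigp = \true$ in $C$ and stays $\true$ throughout the suffix $R'$, each time $p$ evaluates the busy-wait condition at Line~\refln{abrt:try:6} the disjunct $\abortsigp$ holds, so the wait is satisfied immediately and $p$ advances to Line~\refln{abrt:try:7} in a single step. This neutralizes the only unbounded construct in the reachable code.

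It then remains to assemble these pieces by a case analysis on $p$'s location in $C$, using the fact that the call graph among the four procedures is acyclic, so no segment is ever re-entered. If $p$ is in $\tryprocp()$ at or before Line~\refln{abrt:try:6}, it reaches Line~\refln{abrt:try:6} in boundedly many steps, passes it by the argument above, and at Line~\refln{abrt:try:7} either returns $\gotocs$ or proceeds to Line~\refln{abrt:try:8} and runs $\abortprocp()$, which returns $\gotocs$ at Line~\refln{abrt:abort:3} or $\gotorem$ at Line~\refln{abrt:abort:5}; the subcases where $p$ already sits at Lines~\refln{abrt:try:7}--\refln{abrt:try:8} or inside the nested $\abortprocp()$ are just tails of this path. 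If instead $p$ is in $\recoverprocp()$ with $\status_p = \rectry$, then Line~\refln{abrt:rec:1} either returns $\gotorem$ at once or $p$ falls through to Line~\refln{abrt:rec:2} and runs $\abortprocp()$; either branch is bounded and terminates in $\gotocs$ or $\gotorem$. Summing the $O(\log n)$ contributions yields the promised constant $b$.

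I expect the only genuine obstacle to be the justification that the busy-wait at Line~\refln{abrt:try:6} terminates, and this is exactly where the hypothesis that $\abortsigp$ stays $\true$ is indispensable: without it $\go[p]$ could remain nonzero forever and $p$ would loop there. In contrast to Starvation Freedom (Lemma~\ref{lem:abrt:starvfreedom}), no global progress argument through Conditions~\ref{inv:abrt:cond3} or~\ref{inv:abrt:cond4} is needed, since once the wait is short-circuited the remaining reasoning is entirely local to $p$.
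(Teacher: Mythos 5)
Your proposal is correct and follows essentially the same route as the paper's proof: identify the busy-wait at Line~\refln{abrt:try:6} as the only loop, observe that the persistent $\abortsigp = \true$ short-circuits it, and bound the remaining straight-line code (including $\abortprocp()$ and the $\registry$ operations) by $O(\log n)$ steps. Your explicit handling of the $\recoverprocp()$ case from the definition of $\beta$ is slightly more thorough than the paper's write-up, but the argument is the same.
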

\begin{proof}
	For this we note that the only \waittill loop that the algorithm has is at Line~\refln{abrt:try:6}.
	Since $\abortsigp$ stays $\true$ for ever after $C$,
	$p$ either notices that or sees that $\go[p] = 0$ at Line~\refln{abrt:try:6}.
	At Line~\refln{abrt:try:7}, if $p$ sees that $\go[p] = 0$, it moves to the CS, satisfying the condition.
	Otherwise, it invokes $\abortprocp()$ at Line~\refln{abrt:try:8}.
	From an inspection of $\abortprocp()$, we note that the procedure returns within a constant number of steps (i.e., $O(\log n)$ steps, where $n = |\procset|$)
	with a value of either $\gotocs$ or $\gotorem$.
	It follows that the claim holds.
\end{proof}

\begin{lemma}[{\bf No Trivial Aborts}] \label{lem:abrt:notrivialabort}
	In any run, if $\abortsigp$ is $\false$ when a process $p$ invokes $\tryprocp()$,
	$\abortsigp$ remains $\false$ forever, and $p$ executes steps without crashing,
	then $\tryprocp()$ does not return $\gotorem$.
\end{lemma}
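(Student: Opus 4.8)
The plan is to reduce the statement to a single claim: under the hypotheses, $p$ never reaches $\pcp = \refln{abrt:try:8}$. First I would observe that $\tryprocp()$ has exactly two return points --- Line~\refln{abrt:try:7}, which returns $\gotocs$, and Line~\refln{abrt:try:8}, which returns the value produced by $\abortprocp()$. Since $\abortprocp()$ is the only method on this code path that can yield $\gotorem$ (at Line~\refln{abrt:abort:5}), the value $\gotorem$ can escape from $\tryprocp()$ only through Line~\refln{abrt:try:8}. Hence it suffices to show that $p$ never executes Line~\refln{abrt:try:8}.

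To reach Line~\refln{abrt:try:8}, $p$ must first leave the busy-wait at Line~\refln{abrt:try:6} and then find the guard $\go[p] = 0$ at Line~\refln{abrt:try:7} to be false. The wait-till condition is $\go[p] = 0 \vee \abortsigp$, and by hypothesis $\abortsigp$ stays $\false$ throughout $p$'s attempt; so $p$ can leave Line~\refln{abrt:try:6} only by observing $\go[p] = 0$. The crux is then to argue that $\go[p]$ is still $0$ when $p$ re-reads it at Line~\refln{abrt:try:7}. Between these two lines $p$ issues no write to $\go[p]$, so the only danger is a concurrent write by some process $q$; the sole such instruction is the CAS at Line~\refln{abrt:prom:6}, namely $\cas(\go[\peer{q}], \mygo{q}, 0)$ with $\peer{q} = p$. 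By Condition~\ref{inv:abrt:cond12} we have $1 \le \mygo{q} < \token$, so $\mygo{q} \neq 0$; this CAS therefore cannot succeed while $\go[p] = 0$ and leaves $\go[p]$ untouched. Thus $\go[p] = 0$ persists to Line~\refln{abrt:try:7}, the guard holds, and $p$ returns $\gotocs$ instead of falling through to Line~\refln{abrt:try:8}. Consequently $\tryprocp()$ never returns $\gotorem$.

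A slicker finish collapses the whole argument to one appeal to Condition~\ref{inv:abrt:cond9}, whose second conjunct reads $\pcp = \refln{abrt:try:8} \limplies \textnormal{abort was requested}$: since $\abortsigp$ never turns $\true$ during $p$'s attempt, abort is never requested, so $\pcp = \refln{abrt:try:8}$ is unreachable and the conclusion is immediate. I expect the only genuinely delicate point to be the gap between Lines~\refln{abrt:try:6} and~\refln{abrt:try:7} in the direct argument: ruling out a concurrent flip of $\go[p]$ away from $0$ is where Condition~\ref{inv:abrt:cond12} --- together with the fact that Line~\refln{abrt:prom:6}'s CAS targets a strictly positive token --- is essential, while every other step is a direct reading of the control flow.
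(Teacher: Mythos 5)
Your proposal is correct and, in its ``slicker finish,'' is essentially the paper's own proof: the paper likewise notes that $\gotorem$ can only escape via the call to $\abortprocp()$ at Line~\refln{abrt:try:8} and then appeals to Condition~\ref{inv:abrt:cond9} (using its first conjunct to conclude $\go[p]=0$ at Line~\refln{abrt:try:7}, where you use the second conjunct to rule out reaching Line~\refln{abrt:try:8} directly). Your longer ``direct'' argument is also sound, but it merely re-derives the relevant part of Condition~\ref{inv:abrt:cond9} from scratch via Condition~\ref{inv:abrt:cond12}, which the invariant already provides.
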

\begin{proof}
	Since $p$ executes steps without crashing and $\abortsigp$ remains $\false$ forever in the run, 
	the only place $\tryprocp()$ could return $\gotorem$ is due to the nested call to $\abortprocp()$ at Line~\refln{abrt:try:8}.
	However, by Condition~\ref{inv:abrt:cond9}, we know that if $p$ gets past the \waittill loop at Line~\refln{abrt:try:6}, 
	then $\go[p] = 0$ when $\pcp = \refln{abrt:try:7}$ (since abort was not requested when $\tryprocp()$ was invoked and $\abortsigp$ remains $\false$ forever).
	It follows that $p$ returns $\gotocs$ in such a run.
\end{proof}

\subsection{RMR Complexity} \label{sec:rmrcomp}

We discuss the RMR complexity a process incurs per passage as follows.
As described in Lemma 2 of Jayanti and Joshi's work \cite{jayanti:fcfsmutex}, the $\registry.\mwrite()$ operation incurs $O(\min(k, \log n))$ RMRs on both CC and DSM machines, where $k$ is the maximum point contention
during the $\registry.\mwrite()$ operation.
On DSM machines, when the variable $\go[p]$ is hosted in $p$'s memory partition,
any step of the algorithm other than $\registry.\mwrite()$ (at Lines~\refln{abrt:try:4}, \refln{abrt:exit:1}, \refln{abrt:abort:1}) incurs a constant RMR.
Therefore, on DSM machines our algorithm incurs $O(\min(k,\log n))$ RMR per passage.
On CC machines, similarly, it would be tempting to believe that all these other operations incur constant RMRs,
however, it is not so due to the following.
On Strict-CC machines where a failed \cas\ could incur an RMR, the RMR complexity shoots up to $O(n)$ for the following reason.
There could be $n/2$ processes that are waiting to execute Line~\refln{abrt:prom:6} to perform a \cas\ on $\go[p]$.
Out of these processes only one succeeds and the rest fail.
However, each failed \cas\ still incurs an RMR.
Therefore, on Strict-CC machines our algorithm incurs $O(n)$ RMR per passage.
To summarize, the algorithm incurs $O(\min(k, \log n))$ RMRs per passage on DSM and Relaxed-CC machines and $O(n)$ RMRs per passage on Strict-CC machines.

For an attempt having $f$ failures, the implementation of $\registry$ taken from Jayanti and Joshi's work \cite{jayanti:fcfsmutex} would incur $O(f + \min(k,\log n))$ RMRs for the $\registry.\mwrite()$ operation.
%There is also a straightforward implementation directly based on $f$-arrays \cite{farrays} and satisfying %correctness conditions for recoverable objects \cite{attiya:rlin},
%which can use the checkpointing idea from the $\registry$ implementation in \cite{jayanti:fcfsmutex} to get a $O(f %+ \log n)$ RMR implementation for $\registry.\mwrite()$.
Therefore, the algorithm incurs $O(f + \log n)$ RMRs per attempt on DSM and Relaxed-CC machines and $O(f + n)$ RMRs per attempt on Strict-CC machines in the presence of $f$ crashes in an attempt.

\subsection{Proof of Invariant}

\begin{lemma}
	The algorithm in Figure~\ref{algo:abrt} satisfies the invariant (i.e., the conjunction of all the conditions)
	stated in Figure~\ref{inv:abrt}, i.e., the invariant holds in every configuration of every run of the algorithm.
\end{lemma}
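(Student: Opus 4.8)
The plan is to prove the invariant by a single induction on the length of the run, establishing the conjunction of all thirteen conditions \emph{simultaneously}. Proving them together is essential, because it lets me invoke every condition at $C_{i-1}$ when re-establishing any one condition at $C_i$, and the conditions are mutually dependent: the ``responsibility'' Conditions~\ref{inv:abrt:cond3} and~\ref{inv:abrt:cond4} are sustained only with the help of Conditions~\ref{inv:abrt:cond7},~\ref{inv:abrt:cond10} and~\ref{inv:abrt:cond11}, and those in turn rely back on the state tracked by~\ref{inv:abrt:cond3} and~\ref{inv:abrt:cond4}. For the base case I would verify each condition directly against the initial configuration, where $\token = \seq = 1$, $\csowner = (0,1)$, $\minarray[p] = (p,\infty)$, and $\pcp = \refln{abrt:rem:1}$, $\status_p = \good$, $\go[p] = -1$ for every $p$; each condition then has either a false antecedent or an immediately satisfied conclusion.

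For the inductive step I would assume the invariant holds at $C_{i-1}$ and show it survives the step $(C_{i-1}, \alpha_i, C_i)$. The organizing principle is that each condition names only a few variables and a few ranges of $\pcp$, so a step preserves a condition \emph{trivially} unless it writes one of those variables or moves some process's program counter across a named boundary. Thus, for each condition I would first delimit its small set of \emph{relevant} steps and discharge every other (step, condition) pair as vacuous. The cheap conditions --- Conditions~\ref{inv:abrt:cond1},~\ref{inv:abrt:cond2},~\ref{inv:abrt:cond6},~\ref{inv:abrt:cond8},~\ref{inv:abrt:cond9}, and~\ref{inv:abrt:cond12} --- then follow from straightforward bookkeeping on $\token$, $\seq$, the local registers, and the $\minarray$ entries, using the linearizability and idempotence of the min-array to treat the $\registry[p].\mwrite$ and $\registry.\findmin$ steps cleanly (the idempotence being what lets a crash in the middle of a $\mwrite$ be re-run harmlessly).

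I expect the main obstacle to be the handful of steps that touch $\csowner$ and $\go[\cdot]$, namely the two compare-and-swaps in $\promotep()$ and the exit sequence. The successful $\cas$ at Line~\refln{abrt:prom:3}, which flips $\csowner$ from $(0,\myseqp)$ to $(1,\peerp)$, is where CS ownership is created; to re-establish Conditions~\ref{inv:abrt:cond4},~\ref{inv:abrt:cond5},~\ref{inv:abrt:cond7},~\ref{inv:abrt:cond11}, and~\ref{inv:abrt:cond13} after it, I would lean on the precondition recorded in Condition~\ref{inv:abrt:cond11} (which pins down $\peerp$'s program counter) together with the sequence-number accounting of Condition~\ref{inv:abrt:cond10} (which forces $\myseqp$ to be fresh relative to $\seq$ and to every exiting process's $\myseq{q}$), guaranteeing that the newly crowned owner $\peerp$ is genuinely still waiting. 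The successful $\cas$ at Line~\refln{abrt:prom:6}, which writes $0$ into $\go[\peerp]$, is the step on which Mutual Exclusion ultimately rests; here Condition~\ref{inv:abrt:cond13} (relating $\mygop$ to $\go[\peerp]$ and forcing $\csowner = (1,\peerp)$ when $\mygop = \go[\peerp]$) is exactly what rules out the stale-release race described in the informal overview, so that the write preserves the clause $\go[p] = 0 \limplies \csowner = (1,p)$ of Condition~\ref{inv:abrt:cond7}. The exit sequence at Lines~\refln{abrt:exit:2}--\refln{abrt:exit:4}, which increments $\seq$ and resets $\csowner$ to $(0,\myseqp+1)$, needs the parallel argument that it cannot retroactively validate any other process's pending $\cas$ at Line~\refln{abrt:prom:3}, which is again what Condition~\ref{inv:abrt:cond10} buys.

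Finally, the crash steps deserve special care, because a crash can seemingly destroy the \emph{witness} named by an existential in Conditions~\ref{inv:abrt:cond3},~\ref{inv:abrt:cond4}, or~\ref{inv:abrt:cond11}: if the crashing process $q$ was the one ``counted on'' to launch a waiter, one must exhibit a replacement. The key observation making this routine is that $\go[q]$ resides in NVRAM and survives the crash, so a process that crashes out of the try/abort/promote region lands in the Remainder with $\pc{q} = \refln{abrt:rem:1}$ and $\go[q] \neq -1$ (by Condition~\ref{inv:abrt:cond5} applied at $C_{i-1}$), which is \emph{precisely} the disjunct ``$\pc{q} \in \{\refln{abrt:rem:1}, \refln{abrt:rec:1}\} \wedge \go[q] \neq -1$'' that keeps $q$ a valid witness until it re-enters the algorithm via $\recoverprocp()$. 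Checking that every existential witness is either untouched or replaced in this manner across all crash and normal steps is, I expect, the most laborious portion of the argument, but it becomes mechanical once this persistence observation is in hand.
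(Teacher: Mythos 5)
The paper does not actually contain this proof---it states only that the argument is by induction and defers all details to a full version---so there is nothing concrete to compare against; your plan (a simultaneous induction on all thirteen conditions over the steps of a run, a base case checked against the initial configuration, vacuity arguments for steps that touch none of a condition's variables or program-counter ranges, and focused treatment of the two \cas\ instructions, the exit sequence, and the preservation of existential witnesses across crash steps) is precisely the approach the paper announces and is structurally sound. One detail to watch in your base case: the paper declares $\go[p]$ ``initially $\perp$,'' which contradicts Condition~\ref{inv:abrt:cond5}'s requirement that $\go[p]=-1$ whenever $\pcp=\refln{abrt:rem:1}$ and $\status_p=\good$; you implicitly read the initial value as $-1$, which is the only reading under which the invariant can hold in the initial configuration, and this discrepancy should be noted explicitly rather than silently corrected.
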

\begin{proof} 
	 The proof is by induction, but it is omitted because of the page limitation on the submission. The full version of this paper, including this proof, can be found at 
\linebreak
http://people.csail.mit.edu/siddhartha/archive.html
\end{proof}

\subsection{Main theorem}
The theorem below summarizes the result of our paper.

\begin{theorem} \label{thm:abrt}
	The algorithm in Figure~\ref{algo:abrt} is an abortable recoverable mutual exclusion algorithm 
	for $n$ processes and satisfies properties P1-P12 stated in Section~\ref{ch2:spec}. 
	A process incurs $O(\min(k,\log n))$ RMRs per passage on DSM and Relaxed-CC machines and $O(n)$ RMRs per passage on Strict-CC machines.
	In presence of $f$ crashes during an attempt, a process incurs $O(f + \min(k,\log n))$ RMRs per attempt on DSM and Relaxed-CC machines and $O(f + n)$ RMRs per attempt on Strict-CC machines.
\end{theorem}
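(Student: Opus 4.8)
The plan is to assemble the theorem from the lemmas already established, since the substantive work has been done property by property; what remains is to confirm that the artifact of Figure~\ref{algo:abrt} is a legitimate abortable RME algorithm in the formal sense and then to collect the twelve properties and the RMR bounds. First I would check that the algorithm, together with its shared-variable declarations, fits the definition from Section~\ref{ch2:spec}: the process set is $\procset = \{1, \ldots, n\}$; every instruction performs at most one operation on a single variable; the only primitives used are read, write, and \cas, where the $\registry$ object is itself implemented from read, write, and \cas\ as noted in Section~\ref{sec:sharedvars}; the methods $\tryprocp()$ and $\recoverprocp()$ each return a value in $\{\gotocs, \gotorem\}$ while $\exitprocp()$ returns nothing; and none of the three methods in $\methods$ invokes another (the auxiliary $\promotep()$ and $\abortprocp()$ being helper routines, which the model explicitly permits).

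Next I would discharge the twelve properties by direct citation: P1 is Lemma~\ref{lem:abrt:mutex}, P2 is Lemma~\ref{lem:abrt:boundedexit}, P4 is Lemma~\ref{lem:abrt:starvfreedom}, P5 is Lemma~\ref{lem:abrt:fcfs}, P6 is Lemma~\ref{lem:abrt:csr}, P7 is Lemma~\ref{lem:abrt:wfcsr}, P8 is Lemma~\ref{lem:abrt:boundedrec2csexit}, P9 is Lemma~\ref{lem:abrt:fastrec2rem}, P10 is Lemma~\ref{lem:abrt:boundedrec2rem}, P11 is Lemma~\ref{lem:abrt:boundedabort}, and P12 is Lemma~\ref{lem:abrt:notrivialabort}. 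The single property lacking its own lemma is P3 (Weak Starvation Freedom), but as observed when the properties were stated, SF implies WSF: any fair infinite run with only finitely many crash steps in total has, a fortiori, only finitely many crash steps in each attempt, so it satisfies the hypothesis of P4, and hence P3 is subsumed by Lemma~\ref{lem:abrt:starvfreedom}.

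Finally, the RMR bounds are precisely those derived in Section~\ref{sec:rmrcomp}. Per passage the cost is dominated by the single $\registry.\mwrite()$ call, contributing $O(\min(k, \log n))$ RMRs on DSM and Relaxed-CC machines, while the failed-\cas\ argument at Line~\refln{abrt:prom:6} (where up to $n/2$ processes may contend, only one succeeding) inflates the Strict-CC cost to $O(n)$; invoking the idempotence of the $\registry$ implementation then adds an additive $O(f)$ term to cover $f$ crashes within an attempt, yielding the per-attempt bounds $O(f + \min(k, \log n))$ and $O(f + n)$, respectively.

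I do not anticipate a genuine obstacle, since this theorem merely packages independently proved results; the only points demanding care are the bookkeeping that P3 is inherited from P4 rather than argued afresh, and the confirmation that routing all multi-location bookkeeping through the read/write/\cas-based $\registry$ object keeps the entire algorithm within the operation set $\operations = \{\mbox{read}, \mbox{write}, \cas\}$ required by the formal definition.
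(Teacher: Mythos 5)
Your proposal is correct and matches the paper's treatment exactly: the paper offers no explicit proof of Theorem~\ref{thm:abrt}, presenting it as a summary assembled from Lemmas~\ref{lem:abrt:mutex} through~\ref{lem:abrt:notrivialabort} and the RMR analysis of Section~\ref{sec:rmrcomp}, which is precisely the citation-by-citation assembly you describe. Your explicit observations that P3 is subsumed by P4 and that the $\registry$ implementation keeps the algorithm within $\operations = \{\mbox{read}, \mbox{write}, \cas\}$ are both consistent with remarks the paper itself makes in Section~\ref{ch2:properties} and Section~\ref{sec:sharedvars}.
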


\vspace{0.2in}
\noindent
{\bf Acknowledgment}: We thank Siddhartha Jayanti for his careful reading and critical comments on the first three sections of this submission and the Netys '19 reviewers for their feedback.

%% Bibliography
\bibliographystyle{acm}
\bibliography{recoverable}

\end{document}
% end of file template.tex